\documentclass[letterpaper, 10 pt, conference]{ieeeconf}  
\usepackage{amsmath,amssymb,color}
\usepackage{graphicx}
\usepackage{algorithmic,theorem}
\newcommand{\RNum}[1]{\uppercase\expandafter{\romannumeral #1\relax}}
\usepackage[ruled,vlined,linesnumbered]{algorithm2e}

\IEEEoverridecommandlockouts                              
\overrideIEEEmargins

\newtheorem{theorem}{Theorem}[section]

\newtheorem{definition}{Definition}[section]
\newtheorem{corollary}{Corollary}[section]
\newtheorem{lemma}{Lemma}[section]

\newcommand{\erf}{\operatorname{erf}}
\newcommand{\abs}[1]{|{#1}|}
\newcommand{\area}{\operatorname{Area}}
 
\newcommand{\subscr}[2]{{#1}_{\textup{#2}}}

\title{\LARGE \bf
Dynamic Boundary Guarding Against Radially Incoming Targets\thanks{The authors are with the Electrical and Computer Engineering Department, Michigan State University, East Lansing, MI. Emails: \texttt{bajajshi@msu.edu, shaunak@egr.msu.edu}}
}


\author{Shivam Bajaj and Shaunak D. Bopardikar
}

\begin{document}

\maketitle
\thispagestyle{empty}
\pagestyle{empty}

\begin{abstract}
We introduce a dynamic vehicle routing problem in which a single vehicle seeks to guard a circular perimeter against radially inward moving targets. Targets are generated uniformly as per a Poisson process in time with a fixed arrival rate on the boundary of a circle with a larger radius and concentric with the perimeter. Upon generation, each target moves radially inward toward the perimeter with a fixed speed. The aim of the vehicle is to maximize the capture fraction, i.e., the fraction of targets intercepted before they enter the perimeter. We first obtain a fundamental upper bound on the capture fraction which is independent of any policy followed by the vehicle. We analyze several policies in the low and high arrival rates of target generation. For low arrival, we propose and analyze a First-Come-First-Served and a Look-Ahead policy based on repeated computation of the path that passes through maximum number of unintercepted targets. For high arrival, we design and analyze a policy based on repeated computation of Euclidean Minimum Hamiltonian path through a fraction of existing targets and show that it is within a constant factor of the optimal. Finally, we provide a numerical study of the performance of the policies in parameter regimes beyond the scope of the analysis.
\end{abstract}

\section{Introduction} \label{introduction}

 Dynamic vehicle routing (DVR) problems are vehicle routing problems in which the vehicles have to plan their paths through the points of interest which arrive sequentially over time. This paper addresses a DVR problem involving moving targets. Targets are generated at the boundary of an environment and move with a constant speed in order to enter a perimeter. A single vehicle is assigned a task of capturing as many targets as it can before they enter the perimeter. This setup is highly relevant in surveillance applications that involve gathering additional information on mobile targets and in civilian space applications such as protecting the space-stations from  debris. Additional applications of this setup are envisioned in guarding of airport runways from rogue drones.

\subsection{Related Work}

Standard vehicle routing problems in operations research are concerned with planning optimal vehicle routes to visit a set of fixed targets. This requires solving an underlying combinatorial optimization problem~\cite{toth2002}. In contrast, DVR requires that the vehicle routes be re-planned as new information becomes available over time which was originally introduced on graphs in~\cite{psaraftis1988dynamic}. Fundamental limits, novel policies and their constant factor optimality guarantees in continuous environments were established in~\cite{bertsimas1993stochastic}. Environments may be dynamically varying~\cite{papastavrou1996, pavone2011adaptive} and the targets may have multiple levels of priorities \cite{smith2010dynamic} or can be randomly recalled~\cite{Bopardikar2020}. The vehicles may be tasked with performing pickup and delivery operations~\cite{waisanen2008dynamic},\cite{treleaven2013asymptotically}; may possess motion constraints~\cite{savla2008traveling, savla2009traveling, itani2008dynamic} and need not require mutual communication~\cite{arsie2009efficient}. We refer the reader to \cite{bullo2011dynamic} for a review of this literature.

\begin{figure}[t]
	\centering
	\includegraphics[scale=0.6]{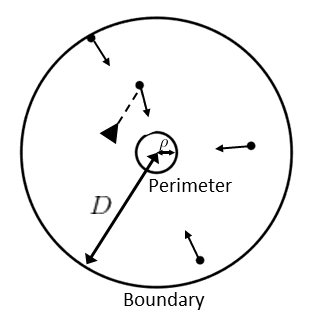}
	\caption{Problem setup. The environment is an annulus of inner radius $\rho$ and outer radius $D$. Targets are depicted as black dots approaching the perimeter at constant speed. The vehicle is shown as a triangle.}
	\label{fig:problem}
\end{figure}
There has been a recent research on pursuit of mobile targets that seek to reach a destination. In \cite{Kawecki2009GuardingAL}, the authors consider a setup of guarding a line segment in the case of a single pursuer and a single evader. They provide with different conditions in which either the defender or the attacker wins. The authors of  \cite{fuchs2018two} consider a pursuit evasion dynamic guarding problem with two vehicles and one demand moving with constant speeds. The vehicles move cooperatively to pursue the demand in a planar environment and gives different cooperative strategy between the two pursuers. The same authors in \cite{inproceedings} consider a differential game of attacker-target-defender. In \cite{Chalasani1999,bopardikar2010dynamic}, the authors consider demands that are slower than the vehicle which is moving parallel to $x$ or $y$ axis. Earlier, we introduced a DVR boundary guarding problem in which a single vehicle was assigned to stop the demands from reaching a deadline in a rectangular environment~\cite{smith2009dynamic}. Our work~\cite{agharkar2015vehicle} considered the case of targets being generated inside a disk-like environment and moving radially outward to escape the region. 

\subsection{Contributions}
The environment considered in this paper is an annulus of inner radius $\rho$ and outer radius $D$. The targets are generated uniformly and randomly on the boundary as per a Poisson process in time with rate $\lambda$. Upon generation, every target moves with a constant velocity $v < 1$ towards the perimeter. A vehicle, modeled as a first-order integrator moving with unit speed, seeks to intercept the targets before they reach the perimeter. The performance of the vehicle is the expected value of the capture fraction, i.e., the fraction of the targets that are intercepted, at steady state.

Our contributions are as follows. We first determine a policy independent upper bound on the capture fraction of the targets and show that it scales as $O(\frac{1+v}{\sqrt{v\lambda \rho}})$. Second, we design three policies for the vehicle and characterize analytic lower bounds on the resulting capture fraction in limiting parameter regimes. In particular, for low arrival rates of the targets, we show that the capture fraction scales as $\Omega(\frac{1}{1+\lambda \rho})$ using a policy based on intercepting the targets as per the first-come-first-served order. We then characterize the performance of a policy with look-ahead based on repeated computation of a path that maximizes the number of targets intercepted in a horizon. Third, in the regime of high arrival rates $\lambda \to +\infty$, we analyze the performance of a policy based on repeated computation of the Euclidean Minimum Hamiltonian Path (EMHP) through the radially moving targets and show that its performance scales as $\Omega(\frac{1-v}{\sqrt{v(1+\sqrt{v})\lambda \rho}})$. In particular, we show that the capture fraction of this policy is within a constant factor of optimal if $v\to 0^+$. Numerical simulations suggest that our analytic bounds generalize beyond the limiting parameter regimes. Although the focus of the problem is similar to the ones in \cite{smith2009dynamic} and in \cite{agharkar2015vehicle}, the geometry of the environment and direction of the targets yield novel results.

\subsection{Organization}
The paper is organized as follows. Section~\ref{sec:problem} comprises the formal problem definition and a summary of background concepts and novel intermediate properties. In Section~\ref{sec:upperbd}, we derive a novel policy independent upper bound on the capture fraction. In Section~\ref{sec:low}, we present two policies suitable for low arrival rate of the targets and we derive novel lower bounds on their performance. In Section~\ref{sec:high}, we present and analyze a policy suitable for high arrival rates. In Section~\ref{sec:numerics}, we present results of numerical simulations. Finally, Section~\ref{sec:conclusion} summarizes this paper and outlines directions for future research.

\section{Problem Formulation and Preliminaries} \label{sec:problem}
We begin with a mathematical description of the problem and provide preliminary properties of the model considered.

\subsection{Mathematical Modeling and Problem Statement}

Consider a disk like environment $\mathcal{E} = \{(r,\theta) : 0 < \rho \leq r \leq D, \, \forall \,  \theta\in[0,2\pi)\}$. Targets appear uniformly and randomly at the boundary of the disk, i.e., at $r = D$. Upon generation, every target moves radially inward towards the inner circular boundary, termed as the $\textit{perimeter}$, having radius $\rho$ in $\mathcal{E}$. The arrival times of the targets are as per a Poisson process with rate $\lambda$. We consider a single service vehicle with motion modeled as a first order integrator with unit speed and with the ability to move in $\mathbb{R}^2$. The vehicle is said to \emph{capture} a target when it is collocated with a target while the target is in $\mathcal{E}$. A target gets removed from the environment if it gets captured by the vehicle. A target is said to \emph{escape} when it reaches the perimeter and is not intercepted by the vehicle. We assume that the speed of the targets is less than that of the vehicle ($v < 1$). Hence, a target must be captured within $(D-\rho)/v$ time units of being generated. We refer to this problem as \emph{RIT problem} for convenience.

\medskip

Let $\mathcal{Q}(t) \subset \mathcal{E}$ denote the set of all outstanding target locations at time $t$. If the $i^{th}$ target that arrives gets captured, then it is placed in $\subscr{\mathcal{Q}}{capt}(t)$ having cardinality $\subscr{n}{capt}(t)$, and is removed from $\mathcal{Q}(t)$. Otherwise, it is placed in $\subscr{\mathcal{Q}}{esc}(t)$ having cardinality $\subscr{n}{esc}(t)$ and removed from $\mathcal{Q}(t)$. Akin to our work in \cite{smith2009dynamic} that formally defined causal nature of policies, we will consider both causal and non-causal policies for the RIT problem, defined as follows.

$\textit{Causal Policy:}$ A causal feedback control policy for a vehicle is a map $\mathcal{P} : \mathcal{E}\times\mathbb{F} \to \mathbb{R}^2$, where $\mathbb{F}(\mathcal{E})$ is the set of finite subsets of $\mathcal{E}$, assigning a commanded velocity to the vehicle as a function of the current state of the system, yielding the kinematic model,
\[
\dot{\mathbf{p}}(t) = \mathcal{P}(\mathbf{p}(t),\mathcal{Q}(t)).
\]
$\textit{Non-causal Policy:}$ In a non-causal feedback control policy, the velocity of the vehicle is a function of current and future states of the system. Although these policies are physically unrealizable, they serve as a means to compare the performance of causal policies against the optimal.

$\textit{Problem Statement:}$ The aim of this paper is to design policies $\mathcal{P}$ that maximize the fraction of targets that are captured $\subscr{\mathbb{F}}{cap}(\mathcal{P})$, termed as $\textit{capture fraction}$, where
\[ \subscr{\mathbb F}{cap}(\mathcal{P}) := \lim_{t\to \infty}\sup\mathbb E \left[\frac {\subscr{n}{capt}(t)}{\subscr{n}{capt}(t)+\subscr{n}{esc}(t)} \right]. \]

In the sequel, we propose different policies that are suitable for low and high arrival rates of the targets. But we first present some preliminary and novel results in the next sub-section which will be used to establish the fundamental limit and analyze the policies.
\subsection{Preliminary Results}
We review a concept related to longest paths on graphs as well as derive some basic results which will be used to obtain bounds on paths through a set of radially moving targets.
\subsubsection{Longest Paths in Directed Acyclic Graphs (DAG)} \label{Longest paths}
A graph $G = (V,E)$ is called a \emph{directed} graph if it consists of a set of vertices $V$ and a set of directed edges $E \subset V\times V$ \cite{smith2009dynamic}. A graph is $\textit{acyclic}$ if its first and the last vertex are not same in the sequence. Finding the longest path, i.e. to find a path that visits maximum number of vertices, is NP-hard to solve as its solution requires the solution of Hamiltonian path problem~\cite{korte2012combinatorial}. However, for a DAG, the longest path problem has an efficient dynamic solution, \cite{Christofides:1975:GTA:1098653} with complexity that scales polynomially with the number of vertices.

\subsubsection{Capturable set}
The \emph{capturable set} comprises the set of locations of all targets that can be captured from a given vehicle location, defined formally as follows.
\begin{definition} A vehicle located at $(x,0)$ can capture targets located in the \emph{capturable set},
\[C(x,v,\rho) := \{(r,\theta) \in \mathcal{E} : r < r_c, \forall \theta \in [0,2\pi)\}\]
where,
\[ r_c(x,v,\rho,\theta) = \min\{D, \rho + v\sqrt{\rho^2+x^2-2x\rho\cos\theta} \}.\]
\end{definition}

If the vehicle located at $(x,0)$ requires time $T$ to intercept a target located at $(r,\theta)$, then $r-vT \geq \rho$. By setting $r_c-vT = \rho$, we obtain the expression for $r_c$. The location $r_c$ corresponds to the location of the targets that the vehicle can capture before they enter the perimeter.

\subsubsection{Distribution of Outstanding Targets}
In this subsection, we derive the distribution of targets in any region of the environment at steady state. Recall that an annular section $A(a,b,\theta_1,\theta_2) \subset\mathbb{R}^2$ is the set $A(a,b,\theta_1,\theta_2):= \{(r,\theta)\mid a\leq r\leq b,\theta \in[\theta_1,\theta_2]\}$.

\begin{lemma}[Outstanding targets]\label{outstanding demands}
Suppose the arrival of targets starts at time 0 and no targets are intercepted in the interval $[0,t]$. Let $\mathcal{Q}$ denote the set of all targets in $[D,D-vt], \forall \theta \in [0,2\pi)$ at time t. Then, given a set $\mathcal{R}(r,\Delta r,\theta,\Delta\theta)$, of infinitesimal area $A$, contained in $\mathcal{Q}$,
\[ \mathbb{P}[\mid\mathcal{R}\cap\mathcal{Q}\mid=n] = \frac{e^{-\bar\lambda A}(\bar\lambda A)^n}{n!}\text{, where } \bar\lambda = \frac{\lambda}{2\pi vr}. \]
Furthermore,
\[ \mathbb{E}[\mid\mathcal{R}\cap\mathcal{Q}\mid]=\bar\lambda r\Delta r \Delta\theta, \, \text{ and } \, \text{Var}[\mid\mathcal{R}\cap\mathcal{Q}\mid]=\bar\lambda r\Delta r \Delta\theta. \] 
\end{lemma}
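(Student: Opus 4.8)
The plan is to recognize the current configuration of outstanding targets as the image of a marked Poisson process under the deterministic radial flow, and then to invoke the mapping (thinning) property of Poisson processes to read off the count distribution in $\mathcal{R}$. First I would establish the correspondence between a target's arrival time and its present radial position: a target generated at time $s\in[0,t]$ moves inward at speed $v$, so at the observation time $t$ it is located at radius $D-v(t-s)$. Inverting this, a target presently at radius $r$ must have arrived at $s = t-(D-r)/v$, an affine, strictly monotone relation that is a bijection between arrival times in $[0,t]$ and radii in $[D-vt,D]$. Since angles are unchanged by radial motion, the map sending (arrival time, arrival angle) to (current radius, current angle) is a bijection onto the annulus $[D-vt,D]\times[0,2\pi)$.

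Next I would argue that $\mathcal{Q}$ is a spatial Poisson point process. The arrivals form a temporal Poisson process of rate $\lambda$, and each arrival is independently assigned a uniform angle on $[0,2\pi)$; by the marking theorem this yields a Poisson process on $[0,t]\times[0,2\pi)$ with constant intensity $\lambda/(2\pi)$ per unit time per unit angle. Applying the mapping theorem through the bijection above transports this to a Poisson point process on the annulus, and the Poisson property — independence over disjoint regions together with Poisson-distributed counts — is precisely what the lemma asserts.

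The count in $\mathcal{R}$ then follows by computing the preimage. A target lies in $\mathcal{R}(r,\Delta r,\theta,\Delta\theta)$ if and only if its angle lies in an interval of length $\Delta\theta$ and, by the radius--time correspondence, its arrival time lies in an interval of length $\Delta r/v$. By thinning, $\abs{\mathcal{R}\cap\mathcal{Q}}$ is Poisson with mean
\[ \lambda\cdot\frac{\Delta r}{v}\cdot\frac{\Delta\theta}{2\pi} = \frac{\lambda}{2\pi v r}\,(r\,\Delta r\,\Delta\theta) = \bar\lambda A, \]
where $A = r\,\Delta r\,\Delta\theta$ is the infinitesimal polar area element and $\bar\lambda = \lambda/(2\pi v r)$. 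This gives the stated probability mass function, and since a Poisson$(\mu)$ law has both mean and variance equal to $\mu$, both $\mathbb{E}[\abs{\mathcal{R}\cap\mathcal{Q}}]$ and $\mathrm{Var}[\abs{\mathcal{R}\cap\mathcal{Q}}]$ equal $\bar\lambda r\,\Delta r\,\Delta\theta$.

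The one step requiring care is the change of variables that produces the two geometric factors in $\bar\lambda$: the factor $1/v$ arising from the temporal-to-radial stretch ($ds = dr/v$), and the factor $1/(2\pi r)$ arising from converting a uniform angular density and a time window into a per-unit-area intensity on the inhomogeneous polar grid (dividing the pushforward intensity $\frac{\lambda}{2\pi v}\,dr\,d\theta$ by the area element $r\,dr\,d\theta$). Once the flow map is written explicitly this is a routine Jacobian calculation; the conceptual work is justifying that the deterministic transport of a marked Poisson process remains Poisson, which is standard but should be stated via the mapping theorem.
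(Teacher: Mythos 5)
Your proof is correct and arrives at the same count distribution, but it gets there by a different mechanism than the paper. The paper works entirely by hand: it conditions on the number $i$ of arrivals in the time window of length $\Delta r/v$ corresponding to the radial band (Poisson with mean $\lambda\Delta r/v$), thins by the binomial probability $\binom{i}{n}\left(\frac{\Delta\theta}{2\pi}\right)^n\left(1-\frac{\Delta\theta}{2\pi}\right)^{i-n}$ that exactly $n$ of those arrivals fall in the angular window, and sums the resulting series to recover a Poisson law with parameter $\bar\lambda A$ --- in effect re-deriving the thinning property from scratch. You instead invoke the marking theorem (to get a space-time Poisson process of intensity $\lambda/2\pi$) and the mapping theorem (to push it forward through the bijection $s\mapsto D-v(t-s)$), then read off the count. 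Both routes rest on the same two geometric facts --- the radial band corresponds to a time interval of length $\Delta r/v$, and the angular window has probability $\Delta\theta/2\pi$ --- so the computations coincide; the difference is the level of abstraction. Your version also delivers something the paper's proof does not explicitly state: the transported configuration is a genuine spatial Poisson process, so counts over disjoint regions are independent. The paper implicitly needs exactly this property later (in the proof of the travel-time lower bound, where $\mathbb{P}[|\bar S_T|=0]$ is factored over non-overlapping area elements), so your formulation is, if anything, better matched to how the lemma is used downstream; the trade-off is reliance on standard Poisson-process theorems in place of the paper's self-contained elementary calculation.
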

\begin{proof}Let $\mathcal{R}(r,\Delta r,\theta,\Delta\theta)\subset\mathbb{R}^2$ be an area element where $\Delta$r,$\Delta$$\theta$ are infinitesimally small. Let $\mathcal{Q}$ denote the set of all targets in $[D,D-vt], \, \forall \theta \in [0,2\pi)$. Then, the probability that $\mathcal{R}$ contains $n$ points out of $\mathcal{Q}$ at time $t$ satisfies   
\begin{equation}
\begin{aligned}
\mathbb{P}&[\abs{\mathcal{R} \cap \mathcal{Q}}=n] \\
&= \sum_{i=n}^{\infty}\mathbb{P} [\text{ i targets arrive in }[\frac{D-(r+\Delta r)}{v},\frac{D-r}{v}]\\  
&\times\mathbb{P}[\text{ n of i targets generated in }(\theta,\theta+\Delta\theta)]. \nonumber
\end{aligned}
\end{equation}
Since the generation is uniform in space and Poisson in time,
\[ \mathbb{P} [\text{i targets arrive in }[0,\frac{\Delta r}{v}]]= \frac{e^{(\frac{-\lambda\Delta r}{v})}(\frac{\lambda \Delta r}{v})^i}{i!},\]
and
\begin{equation}
\begin{aligned}
\mathbb{P}&[\text{n of i targets generated in }(0,\Delta\theta)]\\
&={i\choose n}\left(\frac{\Delta\theta}{2\pi}\right)^n\left(1-\frac{\Delta\theta}{2\pi}\right)^{i-n}. \nonumber
\end{aligned}
\end{equation}
Let $\Delta r/v = R$ and $\Delta\theta/2\pi=H$. Thus,
\begin{equation}
\begin{aligned}
\mathbb{P}[\mid\mathcal{R}\cap\mathcal{Q}\mid=n] &=  \sum_{i=n}^{\infty}  \frac{e^{-\lambda R}(\lambda R)^i}{i!}  \frac{(i!)(H)^n(1-H)^{i-n}}{(n!)(i-n)!} \\
&=  \frac{e^{-\bar\lambda A}(\bar\lambda A)^n}{n!},\nonumber
\end{aligned}
\end{equation}
where $\bar\lambda = \lambda/v2\pi r$ and $A$ is the area of the annulus sector.
\end{proof}

This  result establishes that the number of unintercepted targets in an annulus is Poisson distributed uniformly with parameter $\frac{\lambda}{v}\Delta d$, where $\Delta d$ is the difference in the radii of the annulus.
\begin{figure}[t]
\centering
	{\includegraphics[scale=0.6]{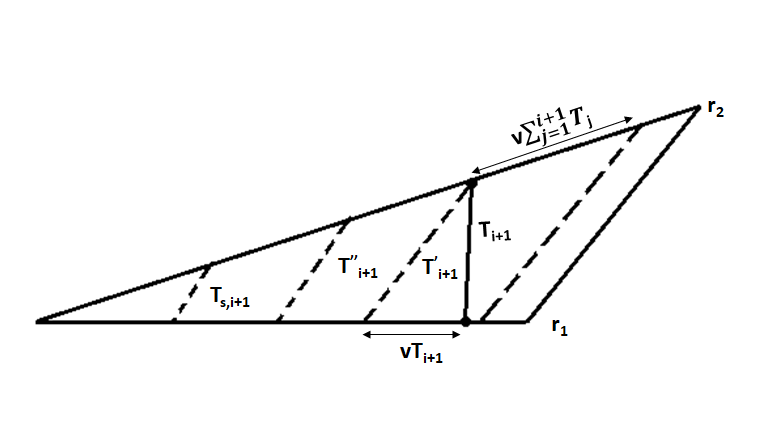}}
	\caption{ The quantity $T_{i+1}$ indicates the time taken by the vehicle to move from target $i$ to the target $i+1$. }
\end{figure}
\subsubsection{Bounds on paths through static and incoming targets}
The following result will be used to bound the length of the path through targets in the environment.
\begin{lemma}[Bounds on path through incoming targets]\label{pathbound}
 Let T be the length of the actual path through the incoming targets and $T_s$ be the length of the path through their static initial locations. Then,
\[ 
\frac{T_s}{1+v} \leq T \leq \frac{T_s}{1-v}. 
\]
\end{lemma}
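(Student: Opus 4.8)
The plan is to reduce the global statement to a single-leg kinematic estimate and then sum. Index the targets in the order the vehicle visits them and, following the figure, let $T_{i+1}$ denote the time (equivalently, the arc length, since the vehicle moves at unit speed) spent on the leg from target $i$ to target $i+1$. The observation I would exploit is that once a target is generated it travels in a straight line with constant velocity of magnitude $v$ (pure radial inward motion), so within any single leg the pursued target moves along a straight segment and a unit-speed pursuer intercepts it by heading straight to the interception point.

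First I would fix a leg. Let $\mathbf{p}_i$ be the vehicle's position when it captures target $i$ (the start of leg $i+1$), let $\mathbf{a}_{i+1}$ be the position of target $i+1$ at that same instant, and let $\mathbf{w}$ be its velocity, $\|\mathbf{w}\|=v$. Writing $\ell_{i+1} := \|\mathbf{a}_{i+1}-\mathbf{p}_i\|$ for the distance to the target's frozen location at the start of the leg, the interception point is $\mathbf{c} = \mathbf{a}_{i+1} + \mathbf{w}\,T_{i+1}$, and straight-line pursuit gives $T_{i+1} = \|\mathbf{c}-\mathbf{p}_i\|$. Decomposing $\mathbf{c}-\mathbf{p}_i = (\mathbf{a}_{i+1}-\mathbf{p}_i) + \mathbf{w}\,T_{i+1}$ and applying the triangle inequality in both directions yields $\ell_{i+1} - v\,T_{i+1} \le T_{i+1} \le \ell_{i+1} + v\,T_{i+1}$, which rearranges to the key per-leg bound $\frac{\ell_{i+1}}{1+v} \le T_{i+1} \le \frac{\ell_{i+1}}{1-v}$. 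Crucially this uses only $\|\mathbf{w}\|=v$, so the radial direction of the target's motion is irrelevant and the estimate is purely kinematic.

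The second step is to sum the per-leg bounds. Since $T=\sum_i T_{i+1}$ and, identifying $T_s$ with $\sum_i \ell_{i+1}$ (a point I return to below), the leg-independent factors $1\pm v$ pull out of the sum, giving exactly $\frac{T_s}{1+v} \le T \le \frac{T_s}{1-v}$ as claimed.

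The main obstacle is making the notion of the \emph{static path} precise and consistent with this telescoping. In the moving scenario the vehicle begins leg $i+1$ at the capture point $\mathbf{p}_i$ rather than at target $i$'s original location, so each $\ell_{i+1}$ measures the distance to target $i+1$ frozen at the instant \emph{its} leg begins, not its time-zero position. I would therefore define the static path as the concatenation of these frozen per-leg segments and state this convention explicitly at the outset, since it is precisely the interpretation under which $\sum_i \ell_{i+1}$ reproduces $T_s$ and the per-leg bounds compose without gaps. The remaining points --- verifying that a unit-speed pursuer can always reach a slower, straight-moving target along a straight interception path so that $T_{i+1}=\|\mathbf{c}-\mathbf{p}_i\|$ holds --- are routine consequences of $v<1$.
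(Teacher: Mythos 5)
Your per-leg estimate is correct, and it is essentially the same triangle-inequality step the paper uses: with $T_{i+1}$ equal to the length of the straight pursuit segment and the target displaced by $vT_{i+1}$ during the leg, one gets $\ell_{i+1}/(1+v)\le T_{i+1}\le \ell_{i+1}/(1-v)$, where $\ell_{i+1}$ is the distance to target $i+1$ frozen at the start of its leg. The gap is in your second step, where you simply \emph{define} $T_s:=\sum_i\ell_{i+1}$. That is not the quantity in the statement: $T_s$ is the length of the path through the targets' static \emph{initial} locations, i.e., through one common-time snapshot of the configuration, and that is the only reading under which the lemma does its job downstream. In Theorem~\ref{thm:RMHP} this lemma is combined with Lemma~\ref{Fews} (Few's bound), which applies to the $n$ target positions at a single instant; your $\ell_{i+1}$ are frozen at different instants (the successive leg starts), so Few's bound says nothing about your $T_s$. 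In short, you have proved a different statement, one that cannot be plugged into the argument the lemma exists to support.

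The bridge between the two definitions is exactly the geometric fact you dismiss as irrelevant (``the radial direction of the target's motion is irrelevant''). Because all targets move radially inward toward a common center with the same speed $v$, pairwise distances are non-increasing in time: writing $a=r_i-vt$, $b=r_j-vt$, one has $d^2(t)=(a-b)^2+2ab\,\bigl(1-\cos(\theta_i-\theta_j)\bigr)$, where $a-b$ is constant and $ab$ decreases. Hence $\ell_{i+1}$, which equals the inter-target distance at the leg-start time $t_i$, satisfies $\ell_{i+1}\le T_{s,i+1}$, the corresponding distance in the initial snapshot; combining this with your per-leg bound and summing yields $T\le T_s/(1-v)$ for the snapshot-defined $T_s$. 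This monotonicity is not cosmetic: for targets with arbitrary fixed headings of speed $v$ --- which is all your ``purely kinematic'' argument assumes --- the snapshot upper bound is simply false; for instance, targets initially collocated and moving apart in different directions have $T_s\approx 0$ while the total pursuit time is not small. (The lower-bound direction is the delicate one under the snapshot definition, since the same monotonicity then points the wrong way when composing the per-leg bounds; the paper's proof handles it by comparing against a snapshot at a separate reference time $\bar T$, and in any case only the upper bound is used later in the paper.)
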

\begin{proof}Let the targets be labeled in the order in which they arrive. Let $T_j$ be the time taken by the vehicle to capture the $j^{th}$ incoming target. Consider the $i^{th}$ target at $(r_i,\theta_i)$. The vehicle will capture this target at time $\sum_{j=1}^{i}T_j$. It then moves toward the $(i+1)^{th}$ target and reaches it in a time duration of $T_{i+1}$ which is equal to the distance covered due to unit speed. Let the distance between $(r_i-v\sum_{j=1}^{i+1}T_j,\theta_{i})$ and $(r_{i+1}-v\sum_{j=1}^{i+1}T_j,\theta_{i+1})$ be $T^{'}_{i+1}$. Also, let $T_{s,i+1}$ be the distance between $(r_i-v\bar T,\theta_i)$ and $(r_{i+1}-v\bar T,\theta_{i+1})$, where $\bar T<T$. As the distance between two targets moving radially inward with same speed is non-increasing function of time, $T_{s,i+1} \leq T^{'}_{i+1}$. Therefore, from triangle inequality (Fig. 3), $T_{i+1} + vT_{i+1} \geq T^{'}_{i+1}$. This means, $T_{i+1} \geq T^{'}_{i+1}/(1+v) \geq T_{s,i+1}/(1+v)$. Extending this to all the targets we get the expression,
$T=\sum_{i=1}^{n}T_{i+1} \geq \sum_{i=1}^{n}\frac{T_{s,i+1}}{1+v}=\frac{T_s}{1+v}.$
The proof for the upper bound is similar to the proof above and has been omitted for brevity.
\end{proof}
Next we review classic results regarding the upper bound on the length of the shortest path through a set of fixed points in an environment.
\begin{lemma}[Shortest Euclidean path~\cite{few_1955}]\label{Fews}
 Given $n$ points in a square of length $R$, there is a path through the $n$ points of length not exceeding $R\sqrt{2n}+1.75R$.
\end{lemma}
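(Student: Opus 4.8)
The plan is to reduce to the unit square by scaling (replace $R$ by $1$ and multiply the final bound by $R$ at the end), and then to construct an explicit path by the classical \emph{boustrophedon}, or serpentine, sweep. First I would partition the square into $m$ horizontal strips of equal height $1/m$, where $m$ is a free parameter to be optimized later. Within each strip I order its points by increasing $x$-coordinate, and I traverse the strips from bottom to top, reversing the horizontal direction in alternate strips so that the exit point of one strip is close, in $x$, to the entry point of the next. This produces a single connected path that visits all $n$ points, whose length I can then estimate directly.

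Next I would bound the total length by splitting it into the intra-strip segments and the $m-1$ inter-strip transitions. Each transition is essentially a vertical move of height $1/m$, so the transitions contribute at most $(m-1)/m \le 1$ in total. For the segments inside a fixed strip containing $n_i$ points, every segment has vertical extent at most the strip height $1/m$, while the horizontal extents are traversed monotonically and therefore sum to at most $1$. Writing each segment length as $\sqrt{\Delta x^2 + \Delta y^2}$ and summing, I would obtain a per-strip bound which, after summation over all strips using $\sum_i n_i = n$, takes the form $m + n/m$ up to constants. Choosing $m$ of order $\sqrt{n}$ balances the two terms and yields a leading term proportional to $\sqrt{n}$; reinstating the scale $R$ then gives the claimed leading term $R\sqrt{2n}$, with the contributions of the boundary strips and the transitions absorbed into the additive $1.75R$.

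The main obstacle is obtaining the sharp constant. The crude estimate $\sqrt{\Delta x^2+\Delta y^2}\le \lvert\Delta x\rvert + \lvert\Delta y\rvert$, applied strip by strip, only delivers a leading term $2R\sqrt{n}$, since it is saturated by the pathological configuration in which each strip forces a single near-full-width horizontal jump together with a stack of vertical wiggles. Ruling this out, so as to replace the factor $2$ by $\sqrt{2}$, requires a more careful amortized accounting of the horizontal and vertical displacements: for instance, bounding $\sum_j \sqrt{\Delta x_j^2 + \Delta y_j^2}$ through a combination of convexity and the Cauchy--Schwarz inequality rather than the triangle inequality, and exploiting the fact that the pessimistic jump cannot recur in every strip simultaneously once the points are genuinely spread across the square. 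Tracking all of these lower-order terms precisely so as to land the explicit additive constant $1.75R$ is the delicate part of the argument, and is exactly the content of Few's original estimate.
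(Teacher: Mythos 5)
The paper offers no proof of this lemma at all---it is Few's 1955 theorem, quoted with a citation---so your proposal must stand on its own merits, and there it has a genuine gap, one you partly flag yourself: the construction you describe provably cannot achieve the stated bound, and the repair you gesture at (convexity plus Cauchy--Schwarz, plus the hope that the bad configuration ``cannot recur in every strip simultaneously'') does not exist. Work in the unit square and rescale by $R$ at the end. In your serpentine path the points inside a strip of height $1/m$ are joined directly in $x$-order, so the per-strip length is $\sum_j \sqrt{\Delta x_j^2+\Delta y_j^2}$ subject to $\sum_j \Delta x_j\le 1$ and $\lvert\Delta y_j\rvert\le 1/m$. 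Since $t\mapsto\sqrt{t^2+c^2}$ is \emph{convex}, this sum is maximized at a vertex of the constraint polytope, i.e.\ exactly at the pathological configuration you mention: one near-full-width horizontal jump plus a stack of vertical wiggles, costing about $1+n_i/m$ in a strip with $n_i$ points. An adversary can realize this in every strip at once---in each strip place $n/m-1$ points in a tight vertical zigzag near $x=0$ (consecutive $x$-values increasing slightly, $y$ alternating between the strip's bottom and top) and one final point at $x=1$---so your path has length about $m+n/m\ge 2\sqrt{n}$ for \emph{every} choice of $m$. Hence the bound $\sqrt{2n}+1.75$ is simply false for your construction when $n$ is large; no sharper analysis of it can succeed, and the construction itself must change.

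What Few actually does differs in two essential ways. First, points are not joined directly: they are visited by vertical detours from a family of horizontal lines with spacing $2h$, traversed boustrophedon-fashion, so the length decomposes as (total line length, about $1/(2h)+1$) plus (serpentine connections, at most about $1$) plus (detour cost $2\sum_i d_i$, where $d_i$ is the distance from point $i$ to its assigned line); this decouples the horizontal and vertical costs. Second---and this is the idea that buys the factor $\sqrt{2}$---he considers \emph{two} interleaved families of lines, one at heights $0,2h,4h,\dots$ and one at heights $h,3h,5h,\dots$, and observes that for every point the distances to the two families sum to exactly $h$; therefore $\sum_i d_i^{A}+\sum_i d_i^{B}=nh$, and the better of the two families has detour cost at most $nh$. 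In other words, each point is charged its \emph{average} vertical cost $h/2$ rather than its worst-case cost $h$. Minimizing $1/(2h)+nh$ over $h$ gives $h=1/\sqrt{2n}$ and leading term $\sqrt{2n}$; the additive constant $7/4$ then comes from an exact count of the lines, end connections, and boundary effects. Without this averaging-over-offset-families device (or an equivalent mechanism), your outline cannot reach $R\sqrt{2n}+1.75R$, which is why your final paragraph, which defers precisely this step to ``Few's original estimate,'' leaves the statement unproved.
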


\begin{lemma}[Length of EMHP tour~\cite{BHH_1959}]\label{lem:BHH}
Consider a set $\mathcal{Q}$ of $n$ points independently and uniformly distributed in a compact set $\mathcal{A}$ of area $\abs{\mathcal{A}}$. Then, there exists a constant $\beta_{TSP}$ such that,
\[
\lim_{n\to\infty} \frac{\ell(EMHP(\mathcal{Q}))}{\sqrt{n}}=\beta_{TSP}\sqrt{\abs{\mathcal{A}}},
\]
with probability one. The constant $\beta_{TSP}$ has been estimated numerically as $\beta_{TSP}\approx0.7120\pm0.0002$.
\end{lemma}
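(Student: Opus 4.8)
The plan is to follow the classical subadditive-functional approach and then upgrade convergence of the mean to almost-sure convergence by a martingale concentration argument. Throughout, write $L(\mathcal{Q}) = \ell(EMHP(\mathcal{Q}))$ for the length of the shortest Hamiltonian path through a point set $\mathcal{Q}$, and let $L_n$ denote this quantity when $\mathcal{Q}$ consists of $n$ points drawn independently and uniformly from $\mathcal{A}$. First I would reduce to the unit square $[0,1]^2$ by exploiting scaling homogeneity: since $L$ is assembled from Euclidean distances, rescaling the domain by a factor $\alpha$ rescales $L$ by $\alpha$, so a limit on $[0,1]^2$ with constant $\beta_{TSP}$ transfers to a region of area $\abs{\mathcal{A}}$ with the factor $\sqrt{\abs{\mathcal{A}}}$ claimed in the statement. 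A standard approximation then handles general compact $\mathcal{A}$ by tiling it with small squares, the points falling in the incomplete boundary cells contributing only a lower-order term.

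Second, I would establish the subadditivity that forces $\mathbb{E}[L_n]/\sqrt{n}$ to converge. Partitioning $[0,1]^2$ into $m^2$ congruent subsquares, one builds a global path by concatenating the shortest paths within each cell and paying a ``stitching'' cost to connect consecutive subpaths; this stitching cost is $O(m)$, which is lower order relative to $\sqrt{n}$ once $m$ is chosen to grow slowly with $n$. Since each cell contains roughly $n/m^2$ points and the within-cell functional rescales by $1/m$, this yields the (approximate) subadditive relation whose only consistent limiting behaviour is $\mathbb{E}[L_n] \sim \beta_{TSP}\sqrt{n}$ for a finite constant $\beta_{TSP}$; the matching lower bound follows because the restriction of a near-optimal global path to a single cell is itself a near-optimal path for that cell.

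Third, and this is the crux, I would promote convergence in mean to almost-sure convergence via concentration of measure. Ordering the points and forming the Doob martingale $M_k = \mathbb{E}[L_n \mid X_1,\dots,X_k]$, the essential geometric fact is that resampling a single point perturbs $L_n$ only within a local neighborhood, so the martingale increments $\abs{M_k - M_{k-1}}$ are controlled and, in the sharp analysis, their squares sum to a bound independent of $n$. This produces a sub-Gaussian tail $\mathbb{P}[\abs{L_n - \mathbb{E}[L_n]} > t] \le 2\exp(-c\,t^2)$ with $c$ independent of $n$; taking $t = \varepsilon\sqrt{n}$ gives $\mathbb{P}[\abs{L_n - \mathbb{E}[L_n]} > \varepsilon\sqrt{n}] \le 2\exp(-c\varepsilon^2 n)$, which is summable, so Borel--Cantelli delivers $L_n/\sqrt{n} \to \beta_{TSP}$ almost surely.

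The main obstacle is precisely this bounded-difference estimate: one must show, uniformly in the configuration of the remaining $n-1$ points, that moving one vertex changes the optimal path length by a controlled amount whose squared contributions accumulate to a constant rather than growing linearly in $n$ (equivalently, that $\mathrm{Var}(L_n) = O(1)$); with only the weaker per-increment bound one still obtains almost-sure convergence, but along the subsequence $n=k^2$ together with a monotonicity argument to fill the gaps. A secondary technical point is the distinction between the closed tour, for which $\beta_{TSP}$ is classically defined, and the Hamiltonian path $EMHP$: deleting the longest edge of an optimal tour yields a path, and the discrepancy between the two functionals is $o(\sqrt{n})$, so both share the same limiting constant $\beta_{TSP}$.
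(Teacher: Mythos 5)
This lemma is not proved in the paper at all: it is quoted from Beardwood--Halton--Hammersley as a classical black-box result (only its conclusion is invoked, in the improved constant-factor corollary of Section~V), so there is no in-paper argument to compare against --- the expected ``proof'' is the citation itself. Your sketch attempts to reprove the classical theorem, and it does follow the standard modern program (subadditivity for convergence of the mean, martingale concentration plus Borel--Cantelli for almost-sure convergence). However, two of its load-bearing steps contain genuine gaps.

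The first gap is the lower-bound direction of the subadditivity argument. You claim that ``the restriction of a near-optimal global path to a single cell is itself a near-optimal path for that cell.'' This is false as stated: an optimal global path can enter and exit a given cell many times, so its restriction to the cell is a collection of disjoint sub-paths rather than a Hamiltonian path of the cell's points, and stitching those fragments into a single path incurs a cost that must itself be controlled. This is exactly the difficulty that forces the rigorous treatments (Steele's subadditive Euclidean functionals, Redmond--Yukich boundary functionals) to introduce a boundary-rooted TSP variant for which matching super- and sub-additivity can be proved. The second gap is quantitative, in the concentration step. With only the constant per-increment bound $\abs{M_k - M_{k-1}} \le C$, Azuma gives $\mathbb{P}\left[\abs{L_n - \mathbb{E}[L_n]} > \varepsilon\sqrt{n}\right] \le 2\exp\left(-\varepsilon^2/(2C^2)\right)$, a constant independent of $n$; this is not summable along \emph{any} subsequence, so your fallback ``along $n=k^2$ with monotonicity'' cannot be closed by Borel--Cantelli. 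What actually works is the refined increment bound $\abs{M_k - M_{k-1}} \le C\min\{1,(n-k+1)^{-1/2}\}$, which yields $\sum_k \abs{M_k-M_{k-1}}^2 = O(\log n)$ and hence tails of order $\exp(-c\varepsilon^2 n/\log n)$, summable with no subsequence needed; your stronger claim that the squared increments sum to an $n$-independent constant is essentially the Rhee--Talagrand sharp deviation theorem ($\operatorname{Var}(L_n)=O(1)$), a deep result proved by different means, not by bounded differences. The remaining ingredients of your sketch --- scaling to the unit square, monotonicity of the path functional under point insertion (valid by the triangle inequality), and the $O(1)$ tour-versus-path discrepancy --- are sound.
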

\section{A Policy-independent Upper Bound} \label{sec:upperbd}
This section summarizes the first of our analytic results -- a policy independent upper bound on the capture fraction. This is a fundamental limit to this problem and is valid for any values of the problem parameters. First, we derive a lower bound on the expected travel time between two targets, and will be used to derive the fundamental bound.
\begin{lemma}[Travel time lower bound]\label{Expectation time}
If $T_d$ is a random variable denoting the time required to travel between targets in $\mathcal{Q}$, then
\[ \mathbb{E}[T_d] \geq  \frac{1}{1+v}\sqrt \frac{v\pi \rho}{2\lambda}. \]
\end{lemma}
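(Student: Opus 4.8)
The plan is to combine the spatial density of outstanding targets from Lemma~\ref{outstanding demands} with a nearest-neighbor distance computation for a planar Poisson field, and then convert the resulting static inter-target distance into a travel time via Lemma~\ref{pathbound}. The three ingredients are: (i) an upper bound on the intensity of outstanding targets, (ii) the expected distance to the nearest point of a Poisson process, and (iii) the factor $\frac{1}{1+v}$ that relates static distances to actual travel times against incoming targets.

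First I would use Lemma~\ref{outstanding demands} to record that the outstanding targets form a spatially inhomogeneous Poisson field whose intensity at radius $r$ is $\bar\lambda(r) = \lambda/(2\pi v r)$ points per unit area. Since $\bar\lambda(r)$ is decreasing in $r$, it attains its largest value at the perimeter, so that $\bar\lambda(r) \le \bar\lambda_{\max} := \lambda/(2\pi v\rho)$ everywhere in $\mathcal{E}$. This uniform bound is what lets me compare against a single homogeneous process.

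Next I would lower bound the distance to the nearest outstanding target. For a homogeneous planar Poisson process of intensity $\mu$, the void probability gives $\mathbb{P}[R > s] = e^{-\mu\pi s^2}$ for the distance $R$ from a fixed vantage point to the nearest point, whence $\mathbb{E}[R] = \int_0^\infty e^{-\mu\pi s^2}\,ds = 1/(2\sqrt{\mu})$. Replacing the true spatially varying intensity by the larger constant $\bar\lambda_{\max}$ only densifies the field and therefore stochastically shrinks the nearest-neighbor distance; hence the static Euclidean distance $T_s$ from the vehicle to its target satisfies $\mathbb{E}[T_s] \ge 1/(2\sqrt{\bar\lambda_{\max}}) = \sqrt{v\pi\rho/(2\lambda)}$. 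Finally, applying Lemma~\ref{pathbound} to the single hop joining the vehicle to its moving target gives $T_d \ge T_s/(1+v)$; taking expectations and chaining the inequalities yields $\mathbb{E}[T_d] \ge \frac{1}{1+v}\sqrt{v\pi\rho/(2\lambda)}$, as claimed.

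The main obstacle is justifying the reduction from the inhomogeneous intensity to the constant upper bound $\bar\lambda_{\max}$ while preserving the direction of the inequality, i.e.\ arguing rigorously (via a coupling or monotonicity of the void probability in $\mu$) that a denser Poisson field produces a stochastically smaller nearest-neighbor distance, and confirming that the location from which the distance is measured is a legitimate typical vantage point (the vehicle's current position, or equivalently a just-captured target, by the Poisson void probability). A secondary care point is checking that Lemma~\ref{pathbound}, stated for an entire path, transfers cleanly to a single inter-target segment so that the factor $\frac{1}{1+v}$ carries over to $T_d$ rather than only to an aggregate path length.
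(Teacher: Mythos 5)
Your proposal is correct and follows essentially the same route as the paper: both arguments reduce $\mathbb{P}[T_d>T]$ to the void probability of a disk whose radius carries the $(1+v)$ relative-speed factor, bound the inhomogeneous intensity $\lambda/(2\pi v r)$ by its maximum $\lambda/(2\pi v\rho)$ attained at the perimeter, and integrate the Gaussian tail, so that after the substitution $s=(1+v)T$ the two computations coincide. The only cosmetic differences are that the paper obtains the $(1+v)$ factor from the containment $S_T\subset\bar S_T$ (a ball of radius $(1+v)T$) rather than by citing Lemma~\ref{pathbound}, and your flagged ``main obstacle'' (monotonicity of the nearest-neighbor distance in the intensity) is dispatched in the paper without any coupling, simply because void probabilities are exponentials of integrated intensities, giving the pointwise bound $e^{-\lambda m^2/(2\pi v s)}\geq e^{-\lambda m^2/(2\pi v\rho)}$ for $s\geq\rho$.
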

\begin{proof}
Let $S_T$ comprise the set of targets that can be reached from vehicle position $(X,0)$ in at most $T$ time units, as shown in Fig.~\ref{fig:ST}. Mathematically,
\[ 
S_T := \{(r,\theta) \in \mathcal{E} \mid X^2 + (r-v T)^2-2X(r-v T)\cos\theta \leq T^2\}.\]
\begin{figure}[t]
\centering
	{\includegraphics[scale=0.4]{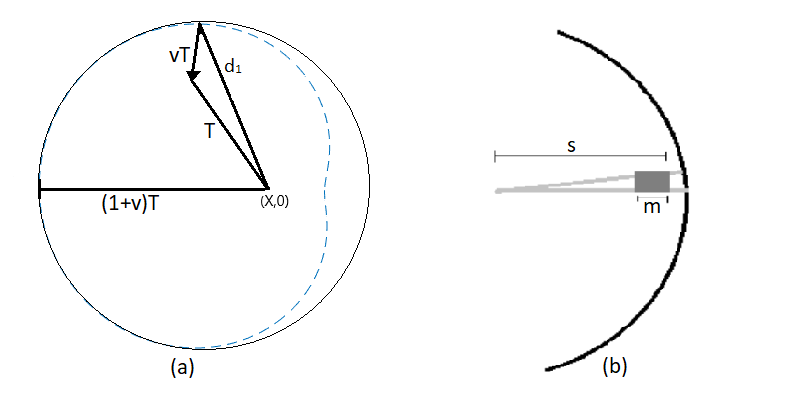}}
	\caption{(a) The set $S_T$ for RIT problem. The set $\bar S_T$ is shown by a solid boundary, which is a circle of radius $(1+v)T$. (b) The area element $\zeta$ of length and width $m$ in $\bar S_T$.}
	\label{fig:ST}
\end{figure}
Also, let
\[  
\bar S_T := \{(r,\theta)\in \mathcal{E} \mid (X-r\cos\theta)^2+(r\sin\theta)^2\leq ((1+v)T)^2\}. 
\]
Since the relative velocity of any target with respect to the vehicle is less than or equal to $(1+v)$, the distance $d_1$ of any point on the boundary of $S_T$ from $(X,0)$ is at most $T(1+v)$. This means that the set $S_T$ will be contained in a circle centered at the same position $(X,0)$ of radius $(1+v)T$, i.e., $S_T \subset \bar S_T$. If $T_d$ denotes the minimum amount of time needed to go from vehicle location $(X,0)$ to any target, then, $T_d > T$, if $S_T$ is empty i.e., $\mathbb{P}[T_d>T]=\mathbb{P}[\abs{S_T}=0]$. Moreover,
\begin{equation}
\begin{aligned}
\mathbb{P}[\abs{\bar S_T} = 0]&=\mathbb{P}[\abs{ S_T}=0]\mathbb{P}[\abs{\bar S_T \setminus  S_T}=0] \leq \mathbb{P}[\abs{S_T}=0].\nonumber
\end{aligned}
\end{equation}
Now, consider an infinitesimal area element $\zeta$ of length and width $m$ at $(s,0)$ from the center (cf.~Fig.~\ref{fig:ST} (b)). From Lemma \ref{outstanding demands}, the probability that $\zeta$ is empty will be:
\begin{equation}
\begin{aligned}
\mathbb{P}[\mid \zeta\mid=0] = e^{\frac{-\lambda}{2\pi v}d\theta dr} = e^{\frac{-\lambda}{2\pi v}\frac{m^2}{s}} &\geq e^{\frac{-\lambda m^2}{2\pi v \rho}} = e^{\frac{-\lambda}{2\pi v \rho}\area(\zeta)}, \nonumber
\end{aligned}
\end{equation}
where the inequality comes from the fact that $s \in [\rho,D]$. Since every compact set can be written as a countable union of non-overlapping area elements, the result is true for $\bar S_T$ as well. Therefore,
\[ \mathbb{P}[\mid S_T\mid=0]\geq \mathbb{P}[\mid\bar S_T\mid=0]\geq e^{\frac{-\lambda}{2v \rho}((1+v)T)^2}, \]
and thus, the expectation of $T_d$ can be bounded as
\begin{multline*}
\mathbb{E}[T_d] =  \int_{0}^{\infty} \mathbb{P}[T_d > T] dT = \int_{0}^{\infty}\mathbb{P}[\abs{ S_T}=0] dT \\ \geq  \int_{0}^{\infty} \mathbb{P}[\abs{\bar S_T}=0]dT 
= \frac{\sqrt{\pi}}{2(1+v)} \sqrt{\frac{2v \rho}{\lambda}}
= \frac{1}{1+v}\sqrt{\frac{\pi v \rho}{2\lambda}}. 
\end{multline*}
\end{proof}
We now present the main result of this section.
\begin{theorem}[Upper bound on capture fraction] \label{thm:upperbound}
Any policy $P$ for the RIT problem must satisfy
\[ \subscr{\mathbb{F}}{cap}(P) \leq \min\Bigg\{1, (1+v)\sqrt \frac{2}{v\lambda \pi \rho}\text{ }\Bigg\}. \]
\end{theorem}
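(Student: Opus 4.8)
The plan is to convert the travel-time lower bound of Lemma~\ref{Expectation time} into an upper bound on the rate at which \emph{any} policy can capture targets, and then normalize by the arrival rate. Since every target that is ever resolved is either captured or escapes, the denominator $\subscr{n}{capt}(t) + \subscr{n}{esc}(t)$ equals the number of arrivals resolved by time $t$, which at steady state grows like $\lambda t$. The numerator $\subscr{n}{capt}(t)$ counts captures, and a single unit-speed vehicle can only service targets sequentially, so the crux is to upper bound the number of captures achievable in $[0,t]$.

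First I would argue that the expected time between one capture and the next is at least $\mathbb{E}[T_d]$ as bounded in Lemma~\ref{Expectation time}. After a capture the vehicle occupies some location and must physically travel to the next target it captures; this travel time is at least the time to reach the \emph{nearest} outstanding target, which is precisely the quantity $T_d$ analyzed there. Crucially, the density used to derive $\mathbb{E}[T_d]$ is the no-capture steady-state density of Lemma~\ref{outstanding demands}; since captures only \emph{remove} targets, the actual outstanding density is no larger, the nearest target is no closer, and hence the bound $\mathbb{E}[T_d]\ge \tfrac{1}{1+v}\sqrt{v\pi\rho/(2\lambda)}$ remains a valid lower bound on the expected inter-capture time under any policy.

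Next I would combine these facts via a renewal/counting argument. Letting $\tau_1,\tau_2,\dots$ denote the successive inter-capture durations with $\mathbb{E}[\tau_k]\ge \mathbb{E}[T_d]$, the number of captures in $[0,t]$ satisfies $\mathbb{E}[\subscr{n}{capt}(t)] \le t/\mathbb{E}[T_d] + O(1)$. Dividing by the expected number of resolved arrivals $\lambda t$ and letting $t\to\infty$ gives
\[
\subscr{\mathbb{F}}{cap}(P) \le \frac{1}{\lambda\,\mathbb{E}[T_d]} \le (1+v)\sqrt{\frac{2}{v\lambda\pi\rho}},
\]
where the last inequality is direct substitution of the lemma's bound. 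Since a fraction cannot exceed one, intersecting with the trivial bound $\subscr{\mathbb{F}}{cap}(P)\le 1$ yields the stated minimum.

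I expect the main obstacle to be the rigor of the inter-capture step: one must justify that the expected gap between consecutive captures is bounded below by $\mathbb{E}[T_d]$ uniformly over all (even non-causal) policies, even though the target configuration is continually reshaped by arrivals and past captures. The cleanest route is the monotonicity observation above---captures can only thin the target field relative to the no-capture reference distribution of Lemma~\ref{outstanding demands}---so that the nearest-neighbor travel time stochastically dominates the quantity bounded in Lemma~\ref{Expectation time}, letting the expectation bound transfer without tracking the exact evolving distribution.
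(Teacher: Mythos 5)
Your proposal is correct and takes essentially the same route as the paper: both reduce the theorem to Lemma~\ref{Expectation time} by arguing that no policy can capture targets at a rate exceeding $1/\mathbb{E}[T_d]$, so the capture fraction is at most $1/(\lambda\,\mathbb{E}[T_d])$, which upon substituting the lemma's bound and intersecting with the trivial bound of $1$ gives the result. The only difference is presentational: the paper dispatches the rate argument by citing the queueing stability condition $c\lambda\,\mathbb{E}[T_d]\leq 1$, whereas you derive it explicitly via a renewal/counting argument and justify the uniform validity of the lemma's bound under arbitrary policies through the monotonicity observation that captures only thin the target field --- a worthwhile elaboration of steps the paper leaves implicit, but the same underlying argument.
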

\begin{proof}
To service a fraction $c \in (0,1]$ of targets, the service rate of the targets must be greater than the arrival rate \cite{Kleinrock:1975:TVQ:1096491}, i.e., $c\lambda\mathbb{E}[T_d]\leq1.$ Using Lemma \ref{Expectation time}, we obtain this result.
\end{proof}

With this fundamental limit in place, we will now present several policies and examine their performance in comparison to this upper bound.

\section{Low Arrival Rate of Targets}\label{sec:low}
We propose two main policies for the parameter regime of low arrival rates of targets. The FCFS policy is very simple to implement whereas the look ahead (LA) policy requires repeated computation of longest paths.
\subsection{First-Come-First-Serve (FCFS) policy}
According to this policy, the vehicle intercepts the targets in the order in which they arrive. If there are no outstanding targets, the vehicle waits at the center of $\mathcal{E}$ for the targets to appear. The policy is summarized in Algorithm~\ref{algo:fcfs}.
\begin{algorithm}[htpb]
	\DontPrintSemicolon
	\SetAlgoLined
	Assumes that the vehicle is at the center.\\
	\eIf{No outstanding targets in $\mathcal{E}$}{
		Wait for the next target to arrive, \;}
	{Move to intercept the target farthest from the boundary. \;
	}
	Repeat.\;
	\caption{First-Come-First-Serve (FCFS) policy}
	\label{algo:fcfs}
\end{algorithm}

\begin{theorem}[FCFS capture fraction]\label{thm:SAC fraction}
For any $v\in [0,1), \lambda \geq 0$, the capture fraction of the FCFS policy satisfies
\[\subscr{\mathbb{F}}{cap}(\textup{FCFS}) \geq \frac{1}{1+2\lambda \rho}.
\]
\end{theorem}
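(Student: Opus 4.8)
The plan is to treat the FCFS vehicle as a single server and to control the fraction of targets lost by bounding the time the server spends on each capture. First I would record the geometric facts that make the queue tractable. Because every target covers the same radial distance $D-\rho$ at the common speed $v$, the targets reach the perimeter $r=\rho$ in exactly their arrival order, so the FCFS service order is identical to the perimeter-crossing order; the vehicle may therefore always aim to intercept the current oldest target at the point where it crosses the circle $r=\rho$. Two such crossing points lie on a circle of radius $\rho$ and hence are at most $2\rho$ apart, while the first interception of any busy period is launched from the center and costs travel time at most $\rho$. Consequently the travel time needed to pass from one capture to the next is at most $2\rho$, and a lone target arriving to an empty system is always caught (provided the environment is non-degenerate, i.e. $D\ge\rho(1+v)$, so that a perimeter crossing point is reachable within the target's time-to-perimeter $(D-\rho)/v$).

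Next I would set up an accounting that charges escapes to captures. The central observation is that a target can escape only if it arrives while the vehicle is already occupied with an earlier target: a target arriving to an empty environment triggers a busy period and is itself captured, and while the vehicle merely waits at a crossing point for the oldest target no younger target can cross ahead of it, so escapes occur only during the active travel between captures. Since each such travel lasts at most $2\rho$ and arrivals form a Poisson process of rate $\lambda$ in time, the expected number of targets arriving during the service of a single captured target is at most $\lambda\cdot 2\rho=2\lambda\rho$; pessimistically declaring every one of these an escape can only decrease the computed capture fraction. Writing $\subscr{n}{capt}$ and $\subscr{n}{esc}$ for the captured and escaped counts, this yields $\mathbb{E}[\subscr{n}{esc}]\le 2\lambda\rho\,\mathbb{E}[\subscr{n}{capt}]$, whence at steady state
\[
\subscr{\mathbb{F}}{cap}(\textup{FCFS})=\lim_{t\to\infty}\sup\mathbb{E}\!\left[\frac{\subscr{n}{capt}}{\subscr{n}{capt}+\subscr{n}{esc}}\right]\ \ge\ \frac{1}{1+2\lambda\rho}.
\]

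I expect the main obstacle to be making this per-capture charging fully rigorous inside a busy period. The delicate points are (i) certifying that the vehicle's non-idle time really decomposes into at most one travel of length $\le 2\rho$ per capture even when it must chase several targets that end up escaping, so that total travel time is genuinely bounded by $2\rho$ times the number of captures; (ii) handling the waiting intervals, by confirming that no escape can be attributed to them; and (iii) passing from cycle-level expectations to the $\limsup$ in the definition of $\subscr{\mathbb{F}}{cap}$, for which a renewal-reward argument over busy/idle cycles (using the Poisson arrival statistics underlying Lemma~\ref{outstanding demands}) is the natural tool. The geometric bound $2\rho$ and the Poisson arrival count are routine; the careful work is verifying that the pessimistic escape count dominates the true FCFS dynamics for every arrival realization.
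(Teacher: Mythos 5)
Your final constant $2\lambda\rho$ and your overall strategy (bound the expected escapes charged to each capture, then invert) are the same in spirit as the paper's, but the specific accounting you propose has a genuine flaw: the claim that ``escapes occur only during the active travel between captures,'' and with it the inequality $n_{\textup{esc}} \le \sum(\text{arrivals during travel phases})$, is false. Since every target needs the identical transit time $(D-\rho)/v$ and your vehicle captures at crossing points, whether a target escapes is decided by its arrival time relative to the \emph{arrival} time of the previously served target, not relative to the vehicle's travel schedule. Concretely: let target $1$ arrive at time $0$ at angle $0$; the vehicle reaches $(\rho,0)$ at time $\rho$ and then waits (a long time, since $(D-\rho)/v \gg \rho$ when $v$ is small); let target $2$ arrive at a time $a_2\in(\rho,2\rho)$ at angle $\pi$, i.e.\ during the waiting phase, and let no further targets arrive. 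Target $1$ is captured at time $(D-\rho)/v$, the vehicle then needs $2\rho > a_2$ time units to cross to $(\rho,\pi)$, so target $2$ escapes --- yet \emph{zero} targets arrived during any travel phase, so your charging inequality fails pathwise ($1\le 0$). Your own ``obstacle (i)'' is also a real obstruction, not a technicality: when the vehicle physically chases a target that ends up escaping, the reachability window for the next target is anchored at the arrival time of the \emph{escaped} target, so the $2\rho$-windows chain; the pessimistic escapes-per-capture count from this analysis is geometric, of order $e^{2\lambda\rho}-1$, which is strictly worse than $2\lambda\rho$ and only yields $\subscr{\mathbb{F}}{cap}\ge e^{-2\lambda\rho}$.

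The paper sidesteps both issues by not analyzing the FCFS dynamics at all: it substitutes an even more conservative \emph{Stay-at-Center} policy in which the vehicle dashes from the center to the perimeter and back, spending exactly $2\rho$ time units per capture. From the center, capturability is angle-independent (a target is capturable iff its remaining time to the perimeter is at least $\rho$, i.e.\ $r\ge(1+v)\rho$), so the escapes in one cycle are precisely the targets crossing that threshold during the $2\rho$ absence --- equivalently, the targets whose arrivals fall in a single window of length $2\rho$ --- with expected number $(\lambda/v)\cdot 2\rho v = 2\lambda\rho$ by Lemma~\ref{outstanding demands}; returning to the center renews the state, so no cascading is possible, and Jensen's inequality applied to $x \mapsto 1/(1+x)$ finishes the proof. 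Your crossing-point policy can be salvaged in the same spirit, but you must (a) re-anchor the charging windows at arrival times, charging to capture $i$ all arrivals in $(a_i, a_i+2\rho)$, and (b) break the cascade, e.g.\ by having the vehicle skip, immediately upon completing a capture, every outstanding target whose crossing point it can no longer reach in time; with those two repairs each escape is charged to exactly one capture whose arrival precedes it by at most $2\rho$, the charge is independent of the capture event by the Poisson property, and the $2\lambda\rho$ bound goes through.
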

\begin{proof}
For $\subscr{n}{capt}(t) > 0$ at some $t > 0$, 
\begin{equation}
\begin{aligned}
\subscr{\mathbb F}{cap} 
&= \lim_{t\to +\infty}\sup\mathbb E \left[\frac{1}{1+\frac{n_{esc}(t)}{n_{capt}(t)}}\right]\\
&\geq \left(1+ \lim_{t\to +\infty}\sup\mathbb E \left[\frac{n_{esc}(t)}{n_{capt}(t)}\right]\right)^{-1} \nonumber
\end{aligned}
\end{equation}
where the last inequality comes from an application of Jensen's inequality applied to the convex function, $\frac{1}{(1+x)}$~\cite{Cvetkovski2012}. Thus, by quantifying the expected number of targets that escape per targets captured, we can determine a lower bound on the capture fraction. 

The FCFS policy is difficult to analyze directly. Thus, we provide the lower bound by analyzing an even simpler \emph{Stay-at-Center (SAC) policy} in which the vehicle captures a target just before the target enters the perimeter and returns back to the center. Consider a target $i$ within the capturable set of the vehicle. The time taken by the vehicle to capture the target just before the perimeter and return back to the center will be $2\rho$. Thus, the number of targets that can breach the perimeter will be equal to the number of targets that enter the capturable set while the vehicle is capturing the $i^{th}$ target, i.e., all the targets that are at a distance of $2\rho v$ from the capturable set. Thus, from Lemma~\ref{outstanding demands}, the expected number of targets that will escape is $2\lambda \rho$ and the result follows.
\end{proof}
\subsection{Policies based on Look-ahead}
We now propose and analyze a class of policies in which we constrict the movement of the vehicle such that the vehicle can only move along the perimeter. While such a motion may be sub-optimal in the present context, it allows us to leverage tools from graph theory and adapt our earlier ideas on longest path through the mobile targets to design efficient policies~\cite{smith2009dynamic}.
\begin{definition}[Reachable targets] A target located at $(r,\theta)$ is \emph{reachable} from the vehicle location $(\rho,\phi)$ if
\[ r-\rho \geq v\abs{\theta-\phi} \rho, \text{ for any } v \in [0,1). \]
\end{definition}
\begin{definition}[Reachable set]
The reachable set from the vehicle position $(\rho,\phi) \in \mathcal{E}$ is
\[ R(\rho,\phi) := \{(r,\theta) \in \mathcal{E} : r-\rho \geq v\abs{\theta-\phi} \rho\}.\]
\end{definition}
This means that a target is reachable if it lies in $R(\rho,\phi)$. Next, we define the notion of a reachability graph over the set of radially incoming targets that is inspired out of a similar construction in \cite{smith2009dynamic}.
\begin{definition}[Reachability graph]
A reachability graph of a set of points $\{(r_1, \theta_1), \dots,(r_n, \theta_n)\} \in \mathcal{E}$, is a DAG with vertex set $V :=\{1,...n\}$, and edge set $E$ where, for $j,k \in V$ and $r_j< r_k$, the edge $(j,k)$ is in $E$ if and only if
\[
(r_k-(r_j-\rho),\theta_k) \in R(\rho,\theta_j).
\]
\end{definition}

We now define a policy based on look-ahead wherein the vehicle computes the longest path in the reachability graph of all the targets in $\mathcal{E}$ and captures them on the perimeter. Algorithm \ref{algo:LA} describes the algorithm for Look Ahead policy.

\begin{algorithm}
	\DontPrintSemicolon
	\SetAlgoLined
	Assumes that vehicle is located on the perimeter.\\
	Compute the reachability graph of all the targets in $\mathcal{Q}(0)$ and the vehicle position.\\
	Compute the longest path in this graph, starting from the vehicle position.\\
	Capture targets in the order they appear at the perimeter.\\
	Repeat;
	\caption{Look Ahead (LA) policy}
	\label{algo:LA}
\end{algorithm}

Akin to the rectangular environment considered in \cite{smith2009dynamic}, Algorithm~\ref{algo:LA} is difficult to analyze directly. Instead, we design a \emph{Non-causal} Look Ahead (NCLA) policy. In this policy, at time $0$, the vehicle computes a reachability graph, from its current position using the information of all the future targets that are yet to arrive in $\mathcal{E}$. The vehicle then computes the longest path and captures the targets in the order in which they appear in that path. The algorithm for NCLA policy is formalized in Algorithm \ref{algo:NCLA}.
\begin{algorithm}
	\DontPrintSemicolon
	\SetAlgoLined
	Assumes that vehicle is located on the perimeter.\\
	Compute the reachability graph of all the targets in $\mathcal{Q}(0) \cup \mathcal{Q}_{unarrived}(0)$ and the vehicle position.\\
	Compute the longest path in this graph, starting from the vehicle position.\\
	Capture targets in the order they reach the perimeter.\;
	\caption{Non-causal Look Ahead (NCLA) policy}
	\label{algo:NCLA}
\end{algorithm}\\
The following result provides a guarantee on the performance of the LA policy relative to the NCLA.

\begin{theorem}[LA policy]\label{thm:LA fraction}
If $D-\rho \geq v\pi\rho$, then the capture fraction of the LA policy satisfies 
\[\subscr{\mathbb{F}}{cap}(\textup{LA}) \geq \left(1-\frac{v\pi\rho}{D-\rho}\right)\subscr{\mathbb{F}}{cap}(\textup{NCLA}). \]
\end{theorem}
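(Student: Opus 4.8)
The plan is to compare the performance of the causal LA policy against the non-causal NCLA policy by quantifying the \emph{information loss} incurred when the vehicle lacks knowledge of future arrivals. The key observation is that the NCLA policy has access, at time $0$, to all targets including those in $\mathcal{Q}_{\textup{unarrived}}(0)$, whereas the LA policy must recompute its longest path each time new targets appear. The difference between the two is that, during the time window in which the vehicle travels along the longest path computed by NCLA, the LA policy cannot account for targets that will arrive during that interval. I would therefore begin by fixing an epoch length equal to the maximum time horizon over which a longest path is executed, and bound the number of targets that NCLA can ``see'' but LA cannot within each epoch.

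The central geometric quantity is the worst-case time for the vehicle, constrained to move along the perimeter of radius $\rho$, to traverse the longest reachable path. Since the vehicle moves at unit speed along the perimeter, a full loop around the perimeter takes time $2\pi\rho$, but the relevant bound involves the angular extent traversed. First I would establish that the reachability condition $r - \rho \ge v\abs{\theta - \phi}\rho$ limits how far a target must descend before the vehicle can reach it; in particular, a target arriving at $r = D$ remains reachable from any perimeter position provided the angular displacement satisfies $\abs{\theta - \phi} \le (D-\rho)/(v\rho)$. The hypothesis $D - \rho \ge v\pi\rho$ guarantees that this reachable angular span is at least $\pi$, so that every newly arrived target at $r=D$ is reachable from the vehicle's current perimeter position. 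The key step is then to show that the time the LA vehicle ``falls behind'' NCLA per recomputation is at most $v\pi\rho$, since this is the extra descent time available, and to translate this time penalty into a fraction of targets lost.

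Concretely, I would argue that within any interval of the execution, the fraction of targets that LA fails to consider relative to NCLA is bounded by the ratio of the ``lookahead deficit'' time $v\pi\rho$ to the total descent time $D-\rho$ available for each target. Because targets arrive via a Poisson process uniformly in angle, the expected number of targets captured by LA in a given epoch is at least a factor $\bigl(1 - \tfrac{v\pi\rho}{D-\rho}\bigr)$ times the number NCLA would capture, by a counting argument on the reachability graph: every edge available to NCLA remains available to LA except for those corresponding to targets whose arrival the LA vehicle could not anticipate within the $v\pi\rho$ deficit window. Taking the limit as $t \to \infty$ and using linearity of expectation over successive epochs, the inequality on capture fractions follows. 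The key is that the longest path computed by LA is never shorter than the restriction of NCLA's path to the targets that LA does observe.

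The main obstacle I anticipate is making the epoch decomposition rigorous: the longest-path recomputation times are themselves random and state-dependent, so carefully defining the renewal structure and ensuring the per-epoch losses telescope correctly into a clean multiplicative factor is delicate. A secondary difficulty is verifying that the $v\pi\rho$ bound on the deficit is genuinely the worst case over all vehicle positions and target configurations, rather than merely a typical value; this requires checking that the half-perimeter angular span $\pi$ (rather than the full $2\pi$) is the correct quantity, which is where the factor of $\pi$ in the hypothesis $D - \rho \ge v\pi\rho$ originates. Once the renewal argument and the geometric worst-case bound are both established, combining them with the monotonicity of the longest-path objective should yield the stated relationship between $\subscr{\mathbb{F}}{cap}(\textup{LA})$ and $\subscr{\mathbb{F}}{cap}(\textup{NCLA})$.
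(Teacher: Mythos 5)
Your proposal assembles several of the right ingredients---coupling LA against NCLA on the same arrival realization, the worst-case angular displacement $\pi$ producing the quantity $v\pi\rho$, the ratio $v\pi\rho/(D-\rho)$, and a domination claim for LA's longest path---but the central counting step mis-identifies \emph{which} targets LA actually loses, and as stated it would not go through. You attribute the loss to ``targets whose arrival the LA vehicle could not anticipate within the $v\pi\rho$ deficit window,'' i.e., to recent, unanticipated arrivals near $r=D$. Those targets are not the ones at risk: a target that arrived during the current path execution still has essentially its whole descent ahead of it, LA will see it at the next recomputation, and it only becomes potentially uncapturable once it descends into the band $\rho \le r < \rho + v\pi\rho$. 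The targets that NCLA captures but LA may miss are therefore the \emph{oldest} outstanding ones, those already in that bottom band at a recomputation instant, because guaranteed reachability from an arbitrary (worst-case, diametrically opposite) perimeter position requires $r-\rho \ge v\pi\rho$. For the same reason your closing ``key'' claim---that LA's longest path is never shorter than the restriction of NCLA's path to the targets LA \emph{observes}---is false as stated; the binding restriction is reachability from LA's current position, not observability. Your ratio comes out numerically right only by coincidence: your (wrong) band at the top of the annulus and the correct band just above the perimeter both have radial width $v\pi\rho$, so under the radial uniformity of outstanding targets they contain the same expected fraction of targets.

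The paper's proof is exactly the repaired version of this argument. At a recomputation instant $t_1$ it writes NCLA's path restricted to the outstanding set $\mathcal{Q}(t_1)$ as $((r_1,\theta_1),\dots,(r_m,\theta_m))$, lets $n$ be the number of these targets that are not reachable from the LA vehicle's position, observes that every such target must satisfy $r_i < \rho + v\pi\rho$ (this is where the hypothesis $D-\rho \ge v\pi\rho$ enters, ensuring everything above that band is reachable from \emph{any} perimeter position), and concludes $\mathcal{L}_a \ge m-n$. It then invokes Lemma~\ref{outstanding demands}---note the uniformity needed here is radial (equivalently temporal), not the angular uniformity you cite---to get $\mathbb{E}[n \mid N_{tot}] = N_{tot}\frac{v\pi\rho}{D-\rho}\subscr{\mathbb{F}}{cap}(\textup{NCLA})$ and $\mathbb{E}[m \mid N_{tot}] = N_{tot}\subscr{\mathbb{F}}{cap}(\textup{NCLA})$, and finishes with the law of total expectation; no epoch-by-epoch renewal bookkeeping is attempted at all. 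Once you replace ``targets LA could not anticipate'' by ``targets on NCLA's path lying in the band $[\rho,\rho+v\pi\rho)$ at the recomputation instant,'' your plan collapses into this snapshot argument; without that replacement, the per-epoch counting bounds the wrong set and cannot yield the stated factor.
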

\begin{proof}[Sketch]
Let the generation of targets begin at $t=0$ and consider two scenarios; (a) Look ahead policy is used by the vehicle, and (b) the vehicle uses the Non-causal look ahead policy (Fig. \ref{fig:proofLA}). Then, akin to the steps in the proof of Theorem \RNum{4}.6 of \cite{smith2009dynamic}, we can compare the number of targets captured in both the scenarios. 

Consider a time instant $t_1$ where the vehicle is computing the path through all outstanding targets $\mathcal{Q}(t_1)$ and denote the path by $\Pi_a(t_1)$. The path in scenario (b), denoted as $\Pi_b(t_1)$, that the vehicle will take through $\mathcal{Q}(t_1)$ be given by $((r_1,\theta_1), (r_2,\theta_2)\dots,(r_m,\theta_m)) \in \mathcal{Q}(t_1).$ The target $(r_1,\theta_1)$ is reachable from $\Pi_b(t_1)$, but might not be reachable from $\Pi_a(t_1)$. The path $\Pi_a(t_1)$ comprises targets: $((r_{n+1},\theta_{n+1}), (r_{n+2},\theta_{n+2}\dots, (r_m,\theta_{m}))$, $n \in \{0,...,m-1\}$, where $(r_{n+1},\theta_{n+1})$ is the highest target that is reachable from $\Pi_a(t_1)$. Thus, the length of $\Pi_a(t_1)$ will satisfy $\mathcal{L}_a\geq m-n$,
where $m$ is the length of the path in (b) because the edge weights of the DAG equal unity. Since the worst-case location for the vehicle is when it is diametrically opposite to the incident target direction, the vehicle can capture any target $i$ if $v\pi \rho + \rho \leq r_i $. Targets at $(r_1,\theta_1),\dots, (r_n,\theta_n)$ must be at least $v \pi \rho+\rho$ from the center.
\begin{figure}[t]
	\centering
	\includegraphics[scale=0.35]{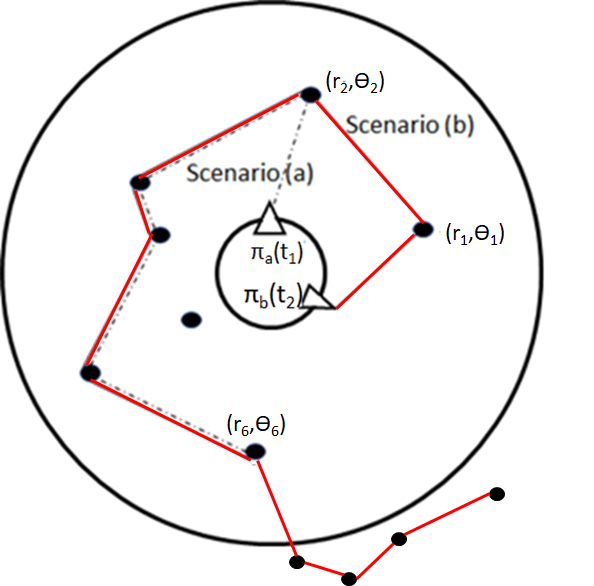}
	\caption{Scenarios for proof of Theorem \ref{thm:LA fraction}. In (a), vehicle visits 5 targets so $\mathcal{L}_a=5$.(b) vehicle visits 6 targets so $m=6$. $(r_2,\theta_2)$ is the highest target on (b) capturable from $p_a(t_1)$ thus $n=1$. The red line shows the path followed by the vehicle in NCLA through arrived and unarrived targets.}
	\label{fig:proofLA} 
\end{figure}
If $N_{tot}$ is the total number of outstanding targets in $\mathcal{E}$ at time $t_1$, then by Lemma \ref{outstanding demands}
\[ \mathbb{E}[n\mid N_{tot}] = N_{tot}\frac{v\pi\rho}{D-\rho}\subscr{\mathbb{F}}{cap}(\textup{NCLA}). \]
Similarly, in scenario (b),
\[ \mathbb{E}[m\mid N_{tot}] = N_{tot}\subscr{\mathbb{F}}{cap}(\textup{NCLA}).\]
Therefore, using the fact that $\mathcal{L}_a \geq m-n$, 
\[\mathbb{E}[\frac{\mathcal{L}_a}{N_{tot}}\mid N_{tot}]\geq\Bigg(1-\frac{v\pi\rho}{D-\rho}\Bigg)\subscr{\mathbb{F}}{cap}(\textup{NCLA}). \]
The ratio $\mathcal{L}_a/N_{tot}$ is the fraction of outstanding targets from the set $\mathcal{Q}(t_1)$ that will be captured in scenario (a) and the ratio does not depend on $N_{tot}$. Thus, by law of total expectation, we get obtain the desired claim.
\end{proof}

In Theorem~\ref{thm:LA fraction}, we established the performance of the LA policy relative to NCLA policy. However, we can also provide an explicit lower bound on the capture fraction for the LA policy, as summarized in the result below. 
\begin{theorem}[Explicit LA capture fraction]\label{thm:LA}
If $D-\rho \geq v\pi\rho$, then the LA policy satisfies
\[\subscr{\mathbb{F}}{cap}(\textup{LA})\geq\frac{1}{\pi\sqrt{\lambda \rho}\erf(\sqrt{\lambda\pi \rho})+e^{-\lambda\pi \rho}},  \]
where $\erf:\mathbb{R}\to[-1,1]$ is the error function.
\end{theorem}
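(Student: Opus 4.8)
The plan is to reduce the claim to an estimate of the expected number of targets that escape per captured target, and then to evaluate that estimate in closed form. First I would repeat the Jensen-inequality step from the proof of Theorem~\ref{thm:SAC fraction}: writing $\subscr{\mathbb F}{cap} = \lim\sup\,\mathbb E[(1 + n_{esc}/n_{capt})^{-1}]$ and using convexity of $x \mapsto 1/(1+x)$ gives $\subscr{\mathbb F}{cap} \geq (1+X)^{-1}$, where $X$ denotes the steady-state expected number of escapes per capture. It therefore suffices to show that $1 + X = \pi\sqrt{\lambda\rho}\,\erf(\sqrt{\lambda\pi\rho}) + e^{-\lambda\pi\rho}$ for the look-ahead policy, transferring between LA and the non-causal NCLA via Theorem~\ref{thm:LA fraction} where needed.

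Second, I would compute $X$ using the geometry of the reachable set together with Lemma~\ref{outstanding demands}. Because the vehicle is constrained to the perimeter, it is natural to pass to the ``schedule'' coordinates $(\theta,\tau)$, where $\tau = (r-\rho)/v$ is the time remaining until a target reaches the perimeter; in these coordinates the targets form a planar Poisson process (the perimeter arrival schedule is again Poisson with rate $\lambda$, independent of $v$), and reachability from the current vehicle position becomes the light-cone condition $|\theta|\rho \leq \tau$. The expected escape count accrued while the vehicle travels to its next capture is then an integral of the Poisson occupancy/void probabilities over this cone. The radial (equivalently $\tau$) integration of the intensity $\bar\lambda$ from Lemma~\ref{outstanding demands} over the capturable region produces a Gaussian factor, i.e. an $\erf$, while the remaining angular integration runs over $[0,\pi]$, i.e. up to half the circumference $\pi\rho$. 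The hypothesis $D-\rho \geq v\pi\rho$ is exactly the condition $\tau_{\max} = (D-\rho)/v \geq \pi\rho$ guaranteeing that the vehicle can angularly reach any point of the perimeter before a freshly generated target arrives, which is what legitimizes integrating the angle all the way to $\pi$ and truncating the Gaussian there. Carrying out these two integrations should yield $X = \sqrt\pi \int_0^{\sqrt{\lambda\pi\rho}} \erf(x)\,dx$.

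Third, the closed form then follows from the elementary antiderivative identity $\frac{d}{da}\bigl(a\,\erf(a) + \tfrac{1}{\sqrt\pi}e^{-a^2}\bigr) = \erf(a)$. Applying it with $a = \sqrt{\lambda\pi\rho}$ gives $\sqrt\pi\int_0^{\sqrt{\lambda\pi\rho}}\erf(x)\,dx = \pi\sqrt{\lambda\rho}\,\erf(\sqrt{\lambda\pi\rho}) + e^{-\lambda\pi\rho} - 1$, so that $1+X$ equals the claimed denominator and $\subscr{\mathbb F}{cap}(\textup{LA}) \geq (1+X)^{-1}$ is precisely the stated bound. As a consistency check, the small-$\lambda\rho$ expansion recovers a bound of order $1/(1+\lambda\pi\rho)$ comparable to the FCFS guarantee, while the large-$\lambda\rho$ behavior is $\Theta(1/(\pi\sqrt{\lambda\rho}))$, matching the scaling of the policy-independent upper bound of Theorem~\ref{thm:upperbound}.

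I expect the main obstacle to be the second step: correctly setting up the region over which the Poisson intensity is integrated so that the escape count factorizes into a radial Gaussian integral times an angular integral, and in particular identifying precisely which targets escape per capture under the \emph{longest-path} look-ahead (as opposed to a naive nearest-target greedy rule). Since the longest-path structure of the reachability graph makes a direct count delicate, I would first bound the escapes-per-capture of LA by that of a realizable perimeter sub-policy whose escape region is explicit, and only then integrate; getting the exact constants (the coefficient $\pi$ and the argument $\sqrt{\lambda\pi\rho}$) right is where the bookkeeping must be done carefully.
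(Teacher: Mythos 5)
You have correctly identified the change of variables that drives the paper's argument: your ``schedule coordinates'' $(\theta\rho,\,(r-\rho)/v)$ are exactly the paper's transformation of $\mathcal{E}$ into the rectangle $\mathcal{E}_2$ of width $2\pi\rho$ in which all targets translate at a common speed toward the far edge, and both your Jensen reduction and your closed-form antiderivative step are correct. But from that point on the two arguments diverge: the paper does \emph{not} recompute anything --- it argues that the LA capture fraction in $\mathcal{E}$ dominates that in the transformed problem and then invokes Theorem IV.8 of \cite{smith2009dynamic}, inheriting the $\erf$ formula with width $W=2\pi\rho$; you propose to re-derive that cited bound from scratch, and this is where the proposal breaks down. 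Concretely, the integration you describe does not produce the formula you claim. If escapes are counted against a two-sided reachability cone on the circle, i.e.\ $p(\tau)=\min\{\tau/(\pi\rho),1\}$ with saturation at half the circumference (which is precisely how you interpret the hypothesis $D-\rho\geq v\pi\rho$), then the renewal computation
\[
X=\int_0^{\infty}\lambda\bigl(1-p(\tau)\bigr)\exp\Bigl(-\lambda\int_0^{\tau}p(s)\,ds\Bigr)\,d\tau
\]
evaluates to $1+X=\sqrt{\pi}\,b\,\erf(b)+e^{-b^{2}}$ with $b=\sqrt{\lambda\pi\rho/2}$, not $b=\sqrt{\lambda\pi\rho}$. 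Your claimed identity $X=\sqrt{\pi}\int_0^{\sqrt{\lambda\pi\rho}}\erf(x)\,dx$ corresponds instead to the one-sided, worst-case reach $p(\tau)=\min\{\tau/(2\pi\rho),1\}$ --- exactly what the unrolled rectangle imposes once the cut destroys wraparound and the vehicle may sit at a corner. So either you do the two-sided computation (getting a strictly stronger bound, which still implies the theorem, but contradicting your own ``exact constants'' bookkeeping and the role you assign to the hypothesis), or you do the one-sided one (matching the theorem, but then the truncation at angle $\pi$ that you ``legitimize'' via the hypothesis is not the region you actually integrate over). As written, the two halves of your second step are inconsistent with each other.

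The more serious gap is the one you yourself defer: nothing in the proposal shows that the \emph{longest-path} LA policy's steady-state escapes-per-capture is bounded by the $X$ of the explicit greedy sub-policy whose cone you integrate over, nor is the renewal-reward step connecting $\limsup_{t}\mathbb{E}[n_{\textup{esc}}(t)/n_{\textup{capt}}(t)]$ to the per-cycle quantity $X$ carried out. That comparison is the actual mathematical content of the result --- it is why the paper declares the LA policy difficult to analyze directly, introduces NCLA, and proves this theorem by reduction to the rectangular problem where that analysis was already done in \cite{smith2009dynamic}. Announcing that you ``would first bound the escapes-per-capture of LA by that of a realizable perimeter sub-policy'' identifies the obstacle but does not overcome it; until both that comparison and the integration above are carried out with consistent (one-sided) constants, the proposal is a plan for a proof rather than a proof.
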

\begin{proof}
The central idea is to construct an invertible transformation between any realization of targets and the vehicle in $\mathcal{E}$ to a rectangular environment akin to the problem
considered in \cite{smith2009dynamic}. We then use the analysis from Theorem IV.8 from \cite{smith2009dynamic} to arrive at this claim.

Consider an environment $\mathcal{E}_1$ that can be constructed by cutting along the radius of $\mathcal{E}$ as illustrated in Fig.~\ref{img:environment transformation}. The transformed environment $\mathcal{E}_1$ is an isosceles trapezoid with the two parallel sides equal to $2\pi D$ and $2 \pi \rho$ respectively. In environment $\mathcal{E}_1$, the transformed targets move from an initial point on the longer side toward the corresponding point on the smaller side so that the time taken by every target to reach the smaller side is the same for all targets. This can be achieved by scaling the speed of each target appropriately as a function of the initial location of the target. $\mathcal{E}_1$ can further be transformed to a rectangular environment $\mathcal{E}_2$ in which the transformed targets move from an initial point on the lower edge toward the upper edge with equal speeds. 
\begin{figure}[t]
	\centering
	\includegraphics[scale=0.35]{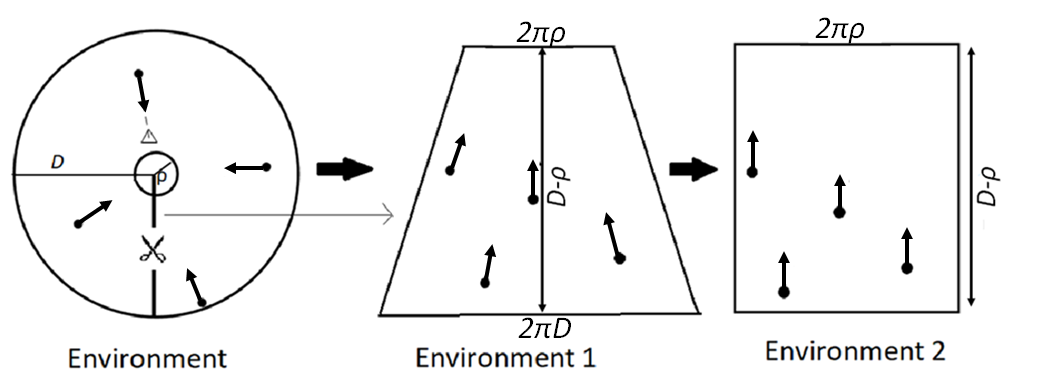}
	\caption{The transformation of the environment from a disk like to a rectangle.}
	\label{img:environment transformation}
\end{figure}

Mathematically, consider a target located at $(r,\theta)$ in $\mathcal{E}$. This means that the target is $r-\rho$ distance away from the perimeter. Thus, in $\mathcal{E}_2$, the target will be $r-\rho$ distance away from the upper edge. Similarly, the target will be at a distance of $\theta \rho$ from one of the sides in $\mathcal{E}_2$. Thus, the location of the transformed target in $\mathcal{E}_2$ will be $(\theta \rho, r-\rho).$ Similarly, if the vehicle is located at $(\rho,\phi)$ and constricted to move on the perimeter, its location after transformation in $\mathcal{E}_2$ will be $\phi \rho$ distance away from one of the side and on the upper edge as in $\mathcal{E}_1$ and captures targets as per Algorithm~\ref{algo:LA}.
Thus, given \emph{any} location of the vehicle on the perimeter and \emph{any} set of locations of targets in $\mathcal{E}$, we can construct a corresponding vehicle location on the upper edge and a set of locations of corresponding targets in $\mathcal{E}_2$ such that the number of targets intercepted by the vehicle in the $\mathcal{E}$ and $\mathcal{E}_2$ is equal. Then, the capture fraction of the LA policy applied to the RIT problem in environment $\mathcal{E}$ is lower bounded by the LA policy applied to the transformed problem in environment $\mathcal{E}_2$. In particular, the LA policy is expected to perform better because of the circular symmetry of $\mathcal{E}$ that allows the vehicle in $\mathcal{E}$ to reach certain targets in a shorter duration. This completes the proof.
\end{proof}	
Note that the capture fraction of LA policy in Theorem~\ref{thm:LA} is independent of $v$. Furthermore, the capture fraction of FCFS policy presented in this section performs well for low arrival rates, i.e., $\mathbb{F}_{cap}(\textup{FCFS})\to 1$ for $\lambda \to 0$, but not in high arrival regime since the upper bound scales as $O(1/\sqrt{\lambda})$. We seek an improved policy in this regime which is the focus of the next section.
\section{High Arrival Rate of Targets}\label{sec:high}
We now introduce a policy based on repeated computation of the EMHP through outstanding radially moving targets. We term this policy as the \emph{Radial Minimum Hamiltonian Path (RMHP) policy}. The key idea is that the number of targets accumulating near the perimeter is high. Thus, the time taken by the vehicle to capture successive targets will be small, and therefore, the vehicle can capture a large number of targets in a single iteration. The vehicle uses solution of the EMHP path (i.e., a path that visits all the points exactly once) to determine the order of the targets to capture. In particular, we consider a constrained EMHP problem which starts at a specific point while visiting all the given set of points \cite{smith2009dynamic}.

The RMHP-fraction policy is defined in Algorithm \ref{algo:RMHP}. In this policy, the vehicle computes the EMHP path through all outstanding targets in $(2\rho,3\rho)$ for all $\theta \in [0,2\pi)$ (Fig.\ref{img:RMHP tour}). The vehicle captures the targets for $\frac{\rho}{v}$ time units and then recomputes the EMHP path through the outstanding targets, allowing the remaining targets in that batch to escape.
\begin{algorithm}
	\DontPrintSemicolon
	\SetAlgoLined
	Assumes that vehicle is located at a distance of $2\rho$ from the origin.\\
	Compute an EMHP path through the outstanding targets located at distance between $2\rho$ and $3\rho$ from the origin.\\
	\eIf{time to travel entire path is less than $\frac{\rho}{v}$}{
	Capture all the outstanding targets by following the computed path.}
	{Capture the targets in the order given by the EMHP.}
	Repeat from step 2.\;
	\caption{RMHP-fraction policy}
	\label{algo:RMHP}
\end{algorithm}
\begin{figure}[t]
	\centering
	\includegraphics[scale=0.4]{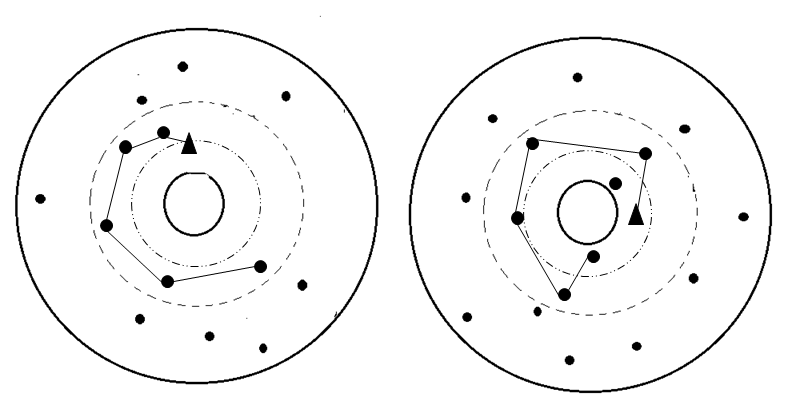}
	\caption{The RMHP-fraction policy. The figure on the left shows the EMHP through all outstanding targets. The figure on the right shows the instant when the vehicle has followed the path through $\rho/v$ time units and allows some targets to escape.}
	\label{img:RMHP tour}
\end{figure}
\begin{theorem}[(RMHP-fraction capture fraction]\label{thm:RMHP}
In the limit as $\lambda \to +\infty$, the capture fraction of the RMHP-fraction policy, with probability one is
\[
\subscr{\mathbb{F}}{cap}\textup{(RMHP-fraction)}\geq \min\Bigg\{1,\frac{1-v}{\alpha \sqrt{v\lambda \rho(1+\sqrt{v})}} \Bigg\},
\]
\end{theorem}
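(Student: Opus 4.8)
The plan is to analyze one steady-state cycle of duration $\rho/v$ and to show that, almost surely as $\lambda \to +\infty$, the expected fraction of the targets in a single batch that the vehicle captures is bounded below by the stated quantity. First I would argue that the capture fraction equals the per-cycle fraction of captured targets. By Lemma~\ref{outstanding demands}, at the start of a cycle the annulus between $2\rho$ and $3\rho$ contains a Poisson-distributed number of outstanding targets with mean $n = \lambda\rho/v$; since every target in this annulus is at least $\rho/v$ time units from the perimeter, none escapes merely because of the perimeter during the cycle, and a conservation-of-flux argument (arrivals at rate $\lambda$ over a window of length $\rho/v$) shows that exactly one batch of mean size $n$ is processed per cycle. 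Hence $\subscr{\mathbb F}{cap} = \limsup \mathbb{E}[N_c/n]$, where $N_c$ is the number captured in one cycle. As $\lambda \to +\infty$ we have $n \to \infty$ and the Poisson count concentrates about its mean, which supplies the ``with probability one'' nature of the bound.

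Next I would bound the time $T$ needed to traverse the computed path through the radially moving targets. Using Lemma~\ref{pathbound}, $T \le T_s/(1-v)$, where $T_s$ is the length of the minimum Hamiltonian path through the targets' positions at the start of the cycle, which accounts for the $(1-v)$ appearing in the numerator. I would then invoke Lemma~\ref{lem:BHH}, giving $T_s \approx \beta_{TSP}\sqrt{n\,\abs{\mathcal{A}}}$ almost surely as $n\to\infty$, with $\abs{\mathcal{A}}$ the area of the effective region the targets occupy. Two points need care. The density in Lemma~\ref{outstanding demands} is uniform in $(r,\theta)$ but not in Cartesian area (it scales as $1/r$ across the annulus), so a generalized version of Lemma~\ref{lem:BHH} or the enclosing-square estimate of Lemma~\ref{Fews} is required; and, crucially, because the targets drift inward during the $\rho/v$ capture window, the region actually swept is larger than the nominal annulus. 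Accounting for this motion-enlarged region is what produces the $(1+\sqrt v)$ factor inside the area, and hence the $\sqrt{v(1+\sqrt v)\lambda\rho}$ scaling; the static annulus alone yields only the $v$-independent $\sqrt{v\lambda\rho}$ part.

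Finally I would relate $N_c$ to the path budget. Since the vehicle moves at unit speed, in one cycle it traverses a path of length $\rho/v$ in the order prescribed by the path, so $N_c$ equals the number of targets on the length-$\rho/v$ prefix of that path. For uniformly distributed points, the subadditivity underlying Lemma~\ref{lem:BHH} forces the length of a prefix to be asymptotically proportional to the number of points it contains, so that almost surely $N_c/n \to \min\{1,(\rho/v)/T\}$: when $T \le \rho/v$ the whole batch is captured (the $1$ in the minimum), and otherwise the captured fraction is $(\rho/v)/T$. Substituting the bound on $T$ from the previous step and simplifying with $n = \lambda\rho/v$ then yields $\subscr{\mathbb{F}}{cap}(\textup{RMHP-fraction}) \ge \min\{1,\ (1-v)/(\alpha\sqrt{v\lambda\rho(1+\sqrt v)})\}$ for an absolute constant $\alpha$.

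I expect the main obstacle to be twofold. The harder analytic point is the prefix-proportionality claim, namely that traversing a fraction $(\rho/v)/T$ of the optimal path captures essentially that same fraction of the batch; this needs the almost-sure equidistribution of uniform points along the EMHP rather than merely the total-length asymptotics of Lemma~\ref{lem:BHH}. The more delicate computational point is pinning down the effective swept area so as to obtain the precise $(1+\sqrt v)$ factor, since this rests on how the inward drift during the $\rho/v$ window interacts with the minimum-path geometry. By comparison, the flux/conservation argument for the batch size and the applications of Lemmas~\ref{outstanding demands} and~\ref{pathbound} are routine.
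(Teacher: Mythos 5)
Your overall skeleton---per-cycle analysis, Lemma~\ref{outstanding demands} giving $\mathbb{E}[n]=\lambda\rho/v$, Lemma~\ref{pathbound} supplying the $1/(1-v)$ factor, a $\sqrt{n}$-type bound on the path length, and then the ratio of the time budget $\rho/v$ to the traversal time---matches the paper. But you misplace the source of the $(1+\sqrt{v})$ factor, and your plan for recovering it would fail. In the paper this factor has nothing to do with a ``motion-enlarged swept region'': the inward drift during the capture window is already fully absorbed by the $1/(1-v)$ of Lemma~\ref{pathbound}, so charging it again to the area would be double counting. The factor comes instead from concentration of the Poisson count $n$. By Lemma~\ref{outstanding demands}, $\mathbb{E}[n]=\sigma_n^2=\lambda\rho/v$, and Chebyshev's inequality with deviation $\gamma=\sqrt{v}\,\mathbb{E}[n]$ gives $\mathbb{P}\bigl[n\geq(1+\sqrt{v})\mathbb{E}[n]\bigr]\leq \sigma_n^2/\gamma^2 = 1/(\lambda\rho)$; substituting $n\leq(1+\sqrt{v})\lambda\rho/v$ into $c=\min\{1,(1-v)/(v\alpha\sqrt{n})\}$ yields exactly the stated bound, with probability at least $1-1/(\lambda\rho)\to 1$ as $\lambda\to+\infty$. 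Your geometric route cannot produce this: during a window of length $\rho/v$ each target moves inward by exactly $v\cdot(\rho/v)=\rho$, so the ``swept'' region would be the annulus of radii $(\rho,3\rho)$ instead of $(2\rho,3\rho)$---an area inflation by a $v$-independent constant, with no mechanism to generate $(1+\sqrt{v})$.

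Two smaller points. First, the paper's theorem uses the deterministic bound of Lemma~\ref{Fews} (the annulus of outer radius $3\rho$ sits in a square of side $6\rho$, giving $T_s\leq\alpha\rho\sqrt{n}$ asymptotically with $\alpha=6\sqrt{2}$), not Lemma~\ref{lem:BHH}; this sidesteps entirely the nonuniform-in-Cartesian-area density issue you flag, and BHH enters only in the corollary to sharpen the constant. Second, your ``prefix-proportionality'' concern is legitimate as a critique: the paper simply asserts that following the path for $\rho/v$ time units captures $cn$ targets, with $c$ the ratio of the time budget to the total traversal time, and gives no equidistribution argument. So you correctly spotted an implicit step in the paper's reasoning, but you misdiagnosed it as where $(1+\sqrt{v})$ arises, and the derivation you propose around it would not reproduce the claimed bound.
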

where $\alpha=6\sqrt{2}.$\\
\begin{proof}
Consider the beginning of an iteration of the policy and assume that the duration of the previous iteration was $\rho/v$ time units. At this instant, the vehicle will be $2\rho$ distance away from the center and suppose there are $n$ outstanding targets in an annulus of radii $2\rho$ and $3\rho$. From Lemma \ref{pathbound} and Lemma \ref{Fews}, the length of the tour through these targets can be upper bounded by $\alpha \rho\sqrt{n}/(1-v)$ where, $\alpha=6\sqrt{2}$. As the vehicle can capture targets for at most $\frac{\rho}{v}$ time units, it will capture $cn$ targets, where 
\begin{equation}
    \begin{aligned}
    c&=\min\Bigg\{1,\frac{\rho/v}{(\rho\alpha \sqrt{n})/(1-v)}\Bigg\} =\min\Bigg\{1,\frac{1-v}{v\alpha \sqrt{n}}\Bigg\}.\nonumber
    \end{aligned}
\end{equation}
From Lemma \ref{outstanding demands}, the expected value $\mathbb{E}[n]$ and the variance $\sigma^{2}_{n}$ of the random variable $n$ is $\frac{\lambda \rho}{v}$. Using Chebyshev inequality, $\mathbb{P}[\abs{ n-\mathbb{E}[n]}\geq\gamma]\leq\sigma^{2}_{n}/\gamma^2$, and $\gamma=\sqrt{v}\mathbb{E}[n]$,
\[\mathbb{P}[n\geq(1+\sqrt{v})\mathbb{E}[n]]\leq\frac{1}{v\mathbb{E}[n]}=\frac{1}{\lambda \rho}. \]
Thus, we have
\[c\geq \min\Bigg\{1,\frac{1-v}{\alpha \sqrt{v\lambda \rho(1+\sqrt{v})}} \Bigg\} \]
with probability at least $1-\frac{1}{\lambda \rho}.$ 
\end{proof}
\begin{corollary}[Constant factor of optimality:]
For $v \to 0^+$ and $\lambda \to +\infty$, the capture fraction
\[c\geq \min\Bigg\{1,\frac{1}{\alpha \sqrt{v\lambda \rho}} \Bigg\},
\]
which is within a constant of $12/\sqrt{\pi} \approx 6.77$.
\end{corollary}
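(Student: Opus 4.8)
The plan is to take the joint limit $v \to 0^+$, $\lambda \to +\infty$ in both the RMHP-fraction lower bound of Theorem~\ref{thm:RMHP} and the policy-independent upper bound of Theorem~\ref{thm:upperbound}, and then compare the two surviving expressions. First I would simplify the lower bound: in $\frac{1-v}{\alpha\sqrt{v\lambda\rho(1+\sqrt v)}}$ the factor $1-v \to 1$ and the factor $1+\sqrt v \to 1$ as $v \to 0^+$, so the bound of Theorem~\ref{thm:RMHP} reduces to $\min\{1,\,1/(\alpha\sqrt{v\lambda\rho})\}$, which is exactly the claimed expression for $c$. This establishes the displayed inequality directly.

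Second, I would simplify the upper bound in the same limit. In $(1+v)\sqrt{2/(v\lambda\pi\rho)}$ the prefactor $1+v \to 1$, so Theorem~\ref{thm:upperbound} yields that \emph{every} policy $P$ satisfies $\subscr{\mathbb{F}}{cap}(P) \le \min\{1,\,\sqrt{2/\pi}\,/\sqrt{v\lambda\rho}\}$. Third, I would restrict attention to the \emph{binding regime}, i.e.\ the regime in which neither $\min$ evaluates to $1$; since $\lambda \to +\infty$, for fixed $\rho$ and $v$ tending to $0$ slower than $1/(\lambda\rho)$ the product $v\lambda\rho \to +\infty$, so both the lower and the upper bound select their second branch. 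There both quantities scale as $1/\sqrt{v\lambda\rho}$, and the ratio of the upper bound to the RMHP-fraction lower bound is the constant
\[
\frac{\sqrt{2/\pi}}{1/\alpha} = \alpha\sqrt{\tfrac{2}{\pi}} = 6\sqrt2\cdot\sqrt{\tfrac{2}{\pi}} = \frac{12}{\sqrt\pi} \approx 6.77.
\]
Hence no policy can exceed the RMHP-fraction capture fraction by more than the factor $12/\sqrt\pi$, which is the claimed constant-factor optimality.

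The main point requiring care is not any hard estimate but the bookkeeping of the $\min\{1,\cdot\}$ caps: one must specify the joint limit so that $v\lambda\rho \to +\infty$, ensuring both bounds lie in the non-trivial branch where the constant-factor comparison is meaningful; in the complementary regime both bounds saturate at $1$ and the policy is trivially optimal. I would also observe that the convergences $1-v,\,1+\sqrt v,\,1+v \to 1$ are monotone in $v$, so the comparison is not merely a limiting statement but furnishes an asymptotic constant-factor guarantee as $v \to 0^+$.
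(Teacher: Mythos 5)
Your proposal is correct and is precisely the argument the paper leaves implicit: the corollary follows by letting $1-v$, $1+\sqrt{v}$, and $1+v$ tend to $1$ in Theorem~\ref{thm:RMHP} and Theorem~\ref{thm:upperbound}, and computing the ratio $\alpha\sqrt{2/\pi} = 12/\sqrt{\pi}$ in the regime $v\lambda\rho \to +\infty$ where neither bound saturates at $1$. Your explicit bookkeeping of the $\min\{1,\cdot\}$ branches is a useful clarification that the paper omits, but it is the same approach.
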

Using Lemma~\ref{lem:BHH}, we obtained a better bound on the length and thus a better constant factor of optimality.
\begin{corollary}[Improved constant factor of optimality:]For $v \to 0^+$ and $\lambda \to +\infty$, the capture fraction
\[c\geq \min\Bigg\{1,\frac{1}{\alpha \sqrt{v\lambda \rho}} \Bigg\},
\]
which is within a constant of $3.988/\sqrt{\pi} \approx 2.25$.

\end{corollary}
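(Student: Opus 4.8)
The plan is to re-run the argument establishing Theorem~\ref{thm:RMHP} essentially verbatim, changing only the bound on the length of the path that the vehicle follows in a single iteration. In that proof, the tour through the $n$ outstanding targets in the annulus $A(2\rho,3\rho,0,2\pi)$ was bounded by enclosing the annulus in a square of side $6\rho$ and invoking Few's lemma (Lemma~\ref{Fews}), which produces the crude coefficient $\alpha = 6\sqrt{2}$. The sharper estimate comes from replacing Lemma~\ref{Fews} by the asymptotic BHH result (Lemma~\ref{lem:BHH}). Everything downstream — the passage from static to moving targets via Lemma~\ref{pathbound}, the Chebyshev concentration of $n$ about its mean, the definition of the served fraction $c$, and the final $v\to 0^+$ limit — then carries over unchanged, with $6\sqrt{2}$ replaced by a smaller constant.

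Concretely, I would first record that the annulus $A(2\rho,3\rho,0,2\pi)$ has area $\abs{\mathcal{A}} = \pi\big((3\rho)^2-(2\rho)^2\big) = 5\pi\rho^2$, and that since $\mathbb{E}[n]=\lambda\rho/v \to +\infty$ as $\lambda\to+\infty$ we indeed operate in the regime $n\to\infty$ required by Lemma~\ref{lem:BHH}. Applying Lemma~\ref{lem:BHH} to the static target locations then gives, with probability one,
\[
\ell(\mathrm{EMHP}) \;\longrightarrow\; \beta_{TSP}\sqrt{n}\,\sqrt{5\pi}\,\rho .
\]
Passing to the actual path through the radially incoming targets via Lemma~\ref{pathbound} multiplies this by at most $1/(1-v)$, so the single-iteration path length is at most $\beta_{TSP}\sqrt{5\pi}\,\rho\sqrt{n}/(1-v)$. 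This identifies the improved coefficient $\alpha = \beta_{TSP}\sqrt{5\pi}\approx 2.82$ in place of $6\sqrt{2}$.

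Substituting this $\alpha$ into the capture-fraction expression of Theorem~\ref{thm:RMHP} and letting $v\to 0^+$ (so that the factors $1-v$ and $1+\sqrt{v}$ both tend to $1$) yields the stated bound $c \ge \min\{1,\,1/(\alpha\sqrt{v\lambda\rho})\}$. To read off the optimality factor I would compare against the policy-independent upper bound of Theorem~\ref{thm:upperbound}, which for $v\to 0^+$ behaves as $\sqrt{2/(v\lambda\pi\rho)} = (\sqrt{2}/\sqrt{\pi})\,(v\lambda\rho)^{-1/2}$. The ratio of this upper bound to the RMHP lower bound is $\sqrt{2}\,\alpha/\sqrt{\pi} = \beta_{TSP}\sqrt{10}\approx 2.25$, which is exactly the claimed $3.988/\sqrt{\pi}$, the discrepancy in the last digit being only the rounding of $\beta_{TSP}$.

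The step requiring the most care is verifying the hypothesis of Lemma~\ref{lem:BHH}, namely that the $n$ points be uniformly distributed in $\mathcal{A}$. By Lemma~\ref{outstanding demands} the outstanding targets are uniform in the $(r,\theta)$ coordinates — the expected count in $\mathcal{R}$ is $\tfrac{\lambda}{2\pi v}\Delta r\,\Delta\theta$, independent of location — but their Cartesian area density is proportional to $1/r$ and hence varies by a factor of $3/2$ across the annulus, so the distribution is not exactly area-uniform. I would address this by invoking the density-weighted (Steele) generalization of Lemma~\ref{lem:BHH}, whose limiting constant $\beta_{TSP}\int_{\mathcal{A}}\sqrt{f}$ over this thin shell can be checked to equal $\beta_{TSP}\cdot 3.96\,\rho$, differing from $\beta_{TSP}\sqrt{5\pi}\,\rho\approx \beta_{TSP}\cdot 3.96\,\rho$ by well under a percent; alternatively, bounding the density above by its value at the inner radius $2\rho$ and absorbing the resulting factor into the constant also preserves the form of the bound. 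Either route confirms that the mild non-uniformity is negligible and changes only the numerical constant, leaving the conclusion intact.
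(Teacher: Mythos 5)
Your proposal follows exactly the paper's route: the paper justifies this corollary with a single remark---replace Few's bound (Lemma~\ref{Fews}) by the BHH asymptotic (Lemma~\ref{lem:BHH}) in the proof of Theorem~\ref{thm:RMHP}---and your computation $\alpha=\beta_{TSP}\sqrt{5\pi}\approx 2.82$ over the annulus of area $5\pi\rho^2$, giving the optimality ratio $\sqrt{2}\,\alpha/\sqrt{\pi}=\beta_{TSP}\sqrt{10}\approx 2.25\approx 3.988/\sqrt{\pi}$ against Theorem~\ref{thm:upperbound}, reproduces its constants. Your extra verification that the targets' Cartesian density ($\propto 1/r$, hence not exactly area-uniform) perturbs the limiting constant by under a percent via the density-weighted form of Lemma~\ref{lem:BHH} is a point of rigor the paper silently omits, but it does not change the approach.
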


\section{Simulations}\label{sec:numerics}
We now present results of numerical experiments for the policies analyzed in Sections \ref{sec:low} and \ref{sec:high}. The parameters $D=20$ and $\rho=3$ were kept fixed. The first result compares the FCFS policy to the lower and fundamental bounds. Comparison of the LA policy to the NCLA policy and to the theoretical bounds is shown in the second result. Finally, the last result compares the RMHP-fraction policy to the theoretical bounds.

For the FCFS policy, we simulated 30 runs for each arrival rate. Fig.~\ref{img:FCFS} shows a comparison of the FCFS policy with the theoretical bounds.
\begin{figure}[t]
	\centering
	\includegraphics[scale=0.3]{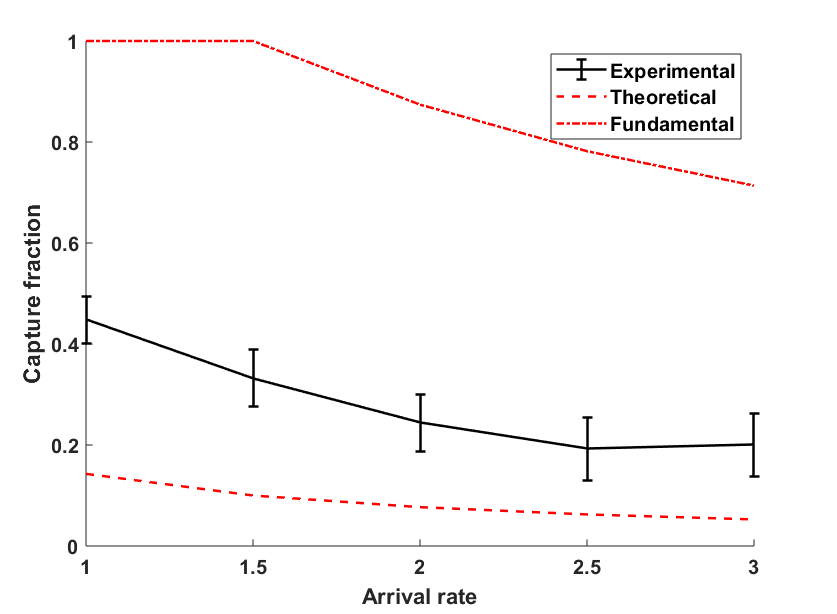}
	\caption{Simulation results for FCFS. Comparison of FCFS policy with fundamental and theoretical bounds for v=0.2. The red dash-dot line shows the fundamental upper bound and red dash curve shows the theoretical bound for FCFS.}
	\label{img:FCFS}
\end{figure}
To simulate the NCLA and LA policies, we implement 5 runs for each value of the arrival rate. Fig.~\ref{img:NCLAsim} shows the comparison of the LA policy with NCLA, fundamental and theoretical bounds.
To simulate the RMHP-fraction policy, the {\ttfamily linkern}\footnote{The TSP solver {\ttfamily linkern} is
  freely available for academic research use at {\ttfamily
    http://www.math.uwaterloo.ca/tsp/concorde/}.} solver was used. For each value of the arrival rate, 5 runs of the policy was taken. The comparison is shown in Fig. \ref{img:RMHP}. Note that the experimental results for RMHP-fraction policy are lower than the theoretical lower bound in Theorem \ref{thm:RMHP}. This is because we have not reached the limit as $v\to 0^{+}$ and $\lambda\to+\infty$. Moreover, we utilize an approximate solution for the EMHP generated by the linkern solver.\\

\begin{figure}[t]
	\centering
	\includegraphics[scale=0.3]{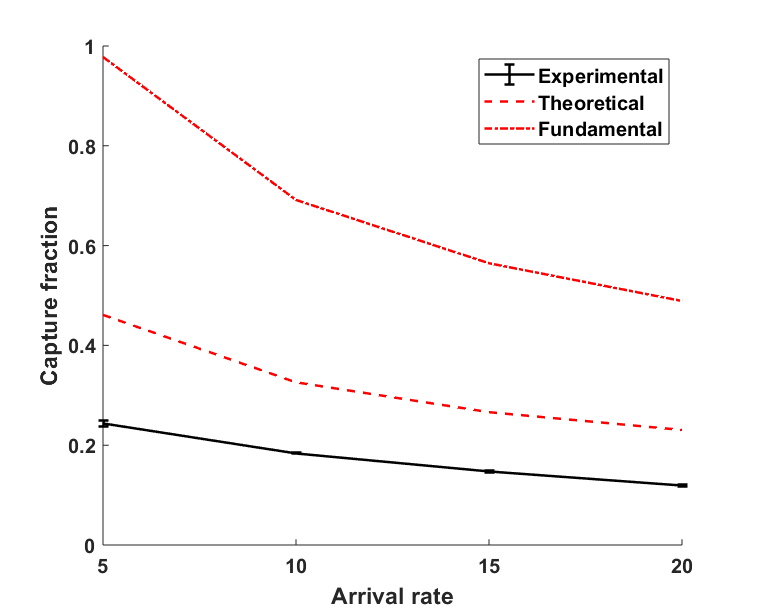}
	\caption{Simulation results for RMHP-fraction policy for v=0.04. The red dash-dot curve is from the policy independent upper bound in Theorem~\ref{thm:upperbound} and the dashed curve shows the theoretical bound. }
	\label{img:RMHP}
\end{figure}
\begin{figure}[t]
	\centering
	\includegraphics[scale=0.3]{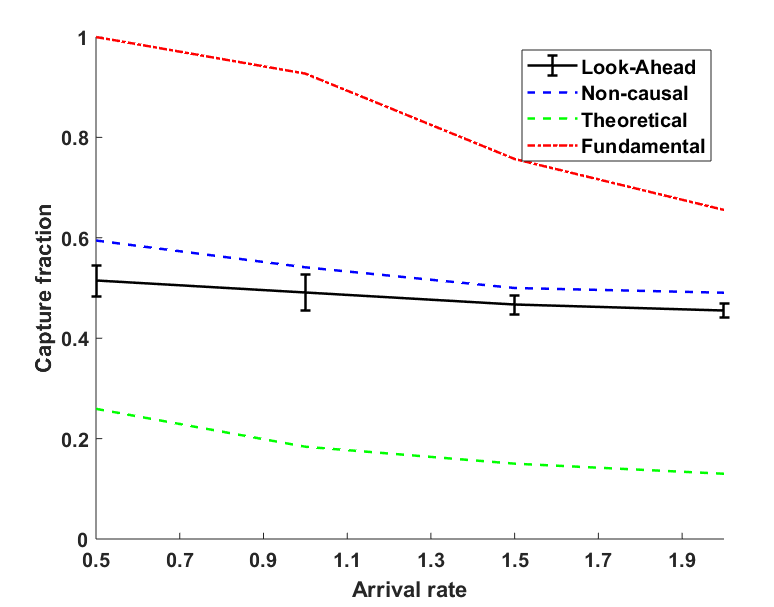}
	\caption{Simulation results for NCLA and LA policy for v=0.8. The red curve shows the fundamental upper bound. Capture fraction of NCLA is shown by the blue curve and the theoretical bound analyzed in \ref{thm:LA fraction} is shown in green color.}
	\label{img:NCLAsim}
\end{figure}
\section{Conclusions and Future Work}\label{sec:conclusion}
This paper introduced the RIT problem, in which a vehicle seeks to defend a perimeter from the radially inward moving targets. We established a policy independent upper bound on the capture fraction of the targets. We then proposed three different policies suitable for low and high arrival rates of the targets and analyzed their respective capture fractions. In the case of low arrival rate of targets, we proposed FCFS, and LA policies. For the latter case, we introduced the RMHP-fraction policy which is within a constant factor of optimal. 

This problem can be extended in many ways. One can consider the case when the speed of the targets is not constant or the targets can maneuver away from the vehicle to escape. Other extensions include multi-vehicle versions of this problem.

\bibliographystyle{IEEEtran}
\bibliography{reference}

\begin{thebibliography}{10}
\providecommand{\url}[1]{#1}
\csname url@samestyle\endcsname
\providecommand{\newblock}{\relax}
\providecommand{\bibinfo}[2]{#2}
\providecommand{\BIBentrySTDinterwordspacing}{\spaceskip=0pt\relax}
\providecommand{\BIBentryALTinterwordstretchfactor}{4}
\providecommand{\BIBentryALTinterwordspacing}{\spaceskip=\fontdimen2\font plus
\BIBentryALTinterwordstretchfactor\fontdimen3\font minus
  \fontdimen4\font\relax}
\providecommand{\BIBforeignlanguage}[2]{{%
\expandafter\ifx\csname l@#1\endcsname\relax
\typeout{** WARNING: IEEEtran.bst: No hyphenation pattern has been}%
\typeout{** loaded for the language `#1'. Using the pattern for}%
\typeout{** the default language instead.}%
\else
\language=\csname l@#1\endcsname
\fi
#2}}
\providecommand{\BIBdecl}{\relax}
\BIBdecl

\bibitem{toth2002}
P.~Toth and D.~Vigo, ``An overview of vehicle routing problems,'' in \emph{The
  vehicle routing problem}.\hskip 1em plus 0.5em minus 0.4em\relax SIAM, 2002,
  pp. 1--26.

\bibitem{psaraftis1988dynamic}
H.~N. Psaraftis, ``Dynamic vehicle routing problems,'' \emph{Vehicle routing:
  Methods and studies}, vol.~16, pp. 223--248, 1988.

\bibitem{bertsimas1993stochastic}
D.~J. Bertsimas and G.~Van~Ryzin, ``Stochastic and dynamic vehicle routing in
  the euclidean plane with multiple capacitated vehicles,'' \emph{Operations
  Research}, vol.~41, no.~1, pp. 60--76, 1993.

\bibitem{papastavrou1996}
J.~D. Papastavrou, ``A stochastic and dynamic routing policy using branching
  processes with state dependent immigration,'' \emph{European Journal of
  Operational Research}, vol.~95, no.~1, pp. 167--177, 1996.

\bibitem{pavone2011adaptive}
M.~Pavone, E.~Frazzoli, and F.~Bullo, ``Adaptive and distributed algorithms for
  vehicle routing in a stochastic and dynamic environment,'' \emph{IEEE Trans.
  on Auto. Cont.}, vol.~56, no.~6, pp. 1259--1274, 2011.

\bibitem{smith2010dynamic}
S.~L. Smith, M.~Pavone, F.~Bullo, and E.~Frazzoli, ``Dynamic vehicle routing
  with priority classes of stochastic demands,'' \emph{SIAM Journal on Control
  and Optimization}, vol.~48, no.~5, pp. 3224--3245, 2010.

\bibitem{Bopardikar2020}
S.~D. Bopardikar and V.~Srivastava, ``Dynamic vehicle routing in presence of
  random recalls,'' \emph{IEEE Control Systems Letters}, 2020, to appear.

\bibitem{waisanen2008dynamic}
H.~A. Waisanen, D.~Shah, and M.~A. Dahleh, ``A dynamic pickup and delivery
  problem in mobile networks under information constraints,'' \emph{IEEE Trans.
  on Auto. Cont.}, vol.~53, no.~6, pp. 1419--1433, 2008.

\bibitem{treleaven2013asymptotically}
K.~Treleaven, M.~Pavone, and E.~Frazzoli, ``Asymptotically optimal algorithms
  for one-to-one pickup and delivery problems with applications to
  transportation systems,'' \emph{IEEE Transactions on Automatic Control},
  vol.~58, no.~9, pp. 2261--2276, 2013.

\bibitem{savla2008traveling}
K.~Savla, E.~Frazzoli, and F.~Bullo, ``Traveling salesperson problems for the
  dubins vehicle,'' \emph{IEEE Transactions on Automatic Control}, vol.~53,
  no.~6, pp. 1378--1391, 2008.

\bibitem{savla2009traveling}
K.~Savla, F.~Bullo, and E.~Frazzoli, ``Traveling salesperson problems for a
  double integrator,'' \emph{IEEE Transactions on Automatic Control}, vol.~54,
  no.~4, pp. 788--793, 2009.

\bibitem{itani2008dynamic}
S.~Itani, E.~Frazzoli, and M.~A. Dahleh, ``Dynamic traveling repairperson
  problem for dynamic systems,'' in \emph{2008 47th IEEE Conference on Decision
  and Control}.\hskip 1em plus 0.5em minus 0.4em\relax IEEE, 2008, pp.
  465--470.

\bibitem{arsie2009efficient}
A.~Arsie, K.~Savla, and E.~Frazzoli, ``Efficient routing algorithms for
  multiple vehicles with no explicit communications,'' \emph{IEEE Transactions
  on Automatic Control}, vol.~54, no.~10, pp. 2302--2317, 2009.

\bibitem{bullo2011dynamic}
F.~Bullo, E.~Frazzoli, M.~Pavone, K.~Savla, and S.~L. Smith, ``Dynamic vehicle
  routing for robotic systems,'' \emph{Proceedings of the IEEE}, vol.~99,
  no.~9, pp. 1482--1504, 2011.

\bibitem{Kawecki2009GuardingAL}
P.~Kawecki, B.~Kraska, K.~Majcherek, and M.~Zola, ``Guarding a line segment,''
  \emph{Systems and Control Letters}, vol.~58, pp. 540--545, 2009.

\bibitem{fuchs2018two}
Z.~E. Fuchs, E.~Garcia, and D.~W. Casbeer, ``Two-pursuer, one-evader pursuit
  evasion differential game,'' in \emph{NAECON 2018-IEEE National Aerospace and
  Electronics Conference}.\hskip 1em plus 0.5em minus 0.4em\relax IEEE, 2018,
  pp. 457--464.

\bibitem{inproceedings}
E.~Garcia, D.~W.~Casbeer, and M.~Pachter, ``Optimal target capture strategies
  in the target-attacker-defender differential game,'' 06 2018, pp. 68--73.

\bibitem{Chalasani1999}
\BIBentryALTinterwordspacing
P.~Chalasani and R.~Motwani, ``Approximating capacitated routing and delivery
  problems,'' \emph{SIAM J. Comput.}, vol.~28, no.~6, pp. 2133--2149, Aug.
  1999. [Online]. Available: \url{https://doi.org/10.1137/S0097539795295468}
\BIBentrySTDinterwordspacing

\bibitem{bopardikar2010dynamic}
S.~D. Bopardikar, S.~L. Smith, F.~Bullo, and J.~P. Hespanha, ``Dynamic vehicle
  routing for translating demands: Stability analysis and receding-horizon
  policies,'' \emph{IEEE Transactions on Automatic Control}, vol.~55, no.~11,
  pp. 2554--2569, 2010.

\bibitem{smith2009dynamic}
S.~L. Smith, S.~D. Bopardikar, and F.~Bullo, ``A dynamic boundary guarding
  problem with translating targets,'' in \emph{Decision and Control, 2009 held
  jointly with the 2009 28th Chinese Control Conference. CDC/CCC 2009.
  Proceedings of the 48th IEEE Conference on}.\hskip 1em plus 0.5em minus
  0.4em\relax IEEE, 2009, pp. 8543--8548.

\bibitem{agharkar2015vehicle}
P.~Agharkar, S.~D. Bopardikar, and F.~Bullo, ``Vehicle routing algorithms for
  radially escaping targets,'' \emph{SIAM Journal on Control and Optimization},
  vol.~53, no.~5, pp. 2934--2954, 2015.

\bibitem{korte2012combinatorial}
B.~Korte, J.~Vygen, B.~Korte, and J.~Vygen, \emph{Combinatorial
  optimization}.\hskip 1em plus 0.5em minus 0.4em\relax Springer, 2012, vol.~2.

\bibitem{Christofides:1975:GTA:1098653}
N.~Christofides, \emph{Graph Theory: An Algorithmic Approach (Computer Science
  and Applied Mathematics)}.\hskip 1em plus 0.5em minus 0.4em\relax Orlando,
  FL, USA: Academic Press, Inc., 1975.

\bibitem{few_1955}
L.~Few, ``The shortest path and the shortest road through n points,''
  \emph{Mathematika}, vol.~2, no.~2, p. 141–144, 1955.

\bibitem{BHH_1959}
J.~Beardwood, J.~H. Halton, and J.~M. Hammersley, ``The shortest path through
  many points,'' \emph{Mathematical Proceedings of the Cambridge Philosophical
  Society}, vol.~55, no.~4, p. 299–327, 1959.

\bibitem{Kleinrock:1975:TVQ:1096491}
L.~Kleinrock, \emph{Theory, Volume 1, Queueing Systems}.\hskip 1em plus 0.5em
  minus 0.4em\relax New York, NY, USA: Wiley-Interscience, 1975.

\bibitem{Cvetkovski2012}
\BIBentryALTinterwordspacing
Z.~Cvetkovski, \emph{Convexity, Jensen's Inequality}.\hskip 1em plus 0.5em
  minus 0.4em\relax Berlin, Heidelberg: Springer Berlin Heidelberg, 2012, pp.
  69--77. [Online]. Available:
  \url{https://doi.org/10.1007/978-3-642-23792-8_7}
\BIBentrySTDinterwordspacing

\end{thebibliography}

\end{document}